\documentclass[review,sort&compress,1p]{elsarticle}



\usepackage{amssymb}
\usepackage{amsthm}
\usepackage[fleqn]{amsmath}
\usepackage{amssymb,graphicx}
\usepackage{amsfonts}
\usepackage{mathrsfs}
\usepackage{epsfig}
\usepackage{cite}
\usepackage[active]{srcltx}
\usepackage{cmap}
\usepackage{amsthm}
\usepackage[active]{srcltx}
\usepackage{amsmath}
\usepackage{epstopdf}
\usepackage{curves}
\usepackage{amssymb}
\usepackage{mathrsfs}
\usepackage{epsfig}
\usepackage{color}
\usepackage{cite}
\usepackage{graphicx}
\usepackage{subfigure}
\usepackage{amsmath}
\usepackage{amsthm}
\usepackage[dvipsnames]{xcolor}
\usepackage{graphicx}
\usepackage{fancybox}
\usepackage{array}
\usepackage{hyperref}
\usepackage[fleqn]{amsmath}
\usepackage{amssymb,amsthm,bm}
\usepackage{latexsym}
\usepackage{graphicx}
\usepackage{psfrag}
\usepackage{color}
\usepackage{subfigure}
\usepackage{hyperref}
\usepackage{amssymb}
\usepackage{amsthm}
\usepackage[fleqn]{amsmath}
\usepackage{amssymb,graphicx}
\usepackage{amsfonts}
\usepackage{mathrsfs}
\usepackage{epsfig}
\usepackage{cite}
\usepackage[active]{srcltx}
\usepackage{cmap}
\usepackage{amsthm}
\usepackage[active]{srcltx}
\usepackage{amsmath}
\usepackage{epstopdf}
\usepackage{curves}
\usepackage{amssymb}
\usepackage{mathrsfs}
\usepackage{epsfig}
\usepackage{color}
\usepackage{cite}
\usepackage{graphicx}
\usepackage{subfigure}
\usepackage{amsmath}
\usepackage{amsthm}
\usepackage[dvipsnames]{xcolor}
\usepackage{graphicx}
\usepackage{fancybox}
\usepackage{array}
\usepackage{hyperref}
\usepackage[fleqn]{amsmath}
\usepackage{amssymb,amsthm,bm}
\usepackage{latexsym}
\usepackage{graphicx}
\usepackage{psfrag}
\usepackage{color}
\usepackage{subfigure}
\usepackage{hyperref}

\newdefinition{rmk}{Remark}
\newproof{pf}{Proof}
\newproof{pot}{Proof of Theorem \ref{thm2}}
\newtheorem{theorem}{Theorem}[section]
\newtheorem{lemma}{Lemma}[section]

\newtheorem{remark}{Remark}[section]


\begin{document}
\begin{frontmatter}
\title{Multi-scale Modeling for  Piezoelectric Composite Materials}

\author[lbl1]{Qian Zhang}
\ead{qzhangxd@ust.hk}
\author[lbl2]{Xingye Yue\corref{cor1}}
\ead{xyyue@suda.edu.cn}

\

\cortext[cor1]{Corresponding author}

\address[lbl1]{Department of Mathematics, The Hong Kong University of
                       Science and Technology Clear Water Bay, Kowloon, Hong Kong, China}

\address[lbl2]{Department of Mathematics, Soochow University, Suzhou, China}
\begin{abstract}
In this paper, we focus on multi-scale modeling and simulation of piezoelectric composite materials.
A multi-scale model for piezoelectric composite materials under the framework of Heterogeneous Multi-scale Method(HMM) is proposed. For materials with periodic microstructure, macroscopic model is derived from microscopic
model of piezoelectric composite material by asymptotic expansion.
Convergence analysis under the framework of homogenization theory is carried out. Moreover, error
estimate between HMM solutions and homogenization solutions is derived. A 3-D numerical example of 1-3 type piezoelectric composite materials is employed to verify the error estimate.
\end{abstract}

\begin{keyword}
Piezoelectric composite materials \sep Multi-scale modeling \sep Homogenization theory \sep Asymptotic expansion \sep Heterogeneous multi-scale method
\end{keyword}
\end{frontmatter}

\section{Introduction}

Piezoelectricity is the ability of some materials to generate an electric field or electric potential in response to mechanical strain applied. The piezoelectric effect is reversible in that materials exhibiting the direct piezoelectric effect (the production of an electric potential when stress is applied) also exhibit the reverse piezoelectric effect (the production of stress and/or strain when an electric field is applied). The effect is found useful in applications such as the production and detection of sound, generation of high voltages, electronic frequency generation, microbalances, and ultra fine focusing of optical assemblies. It is also the basis of a number of scientific instrumental techniques with atomic resolution, the scanning probe microscopies such as STM, AFM, MTA, SNOM, etc., and everyday uses such as acting as the ignition source for cigarette lighters and push-start propane barbecues. Moreover, the piezoelectric effect and the reverse piezoelectric effect can be reflected  by the following two  constitutive relationships respectively
\begin{align}
&D_{m} = \epsilon_{mn}\,E_{n} + d_{mkl}\,\sigma_{kl}
&S_{ij} = s_{ijkl}\,\sigma_{kl} + d_{mij}\,E_{m}
\end{align}
where $\mathbf{D}$ is electric displacement, $\epsilon_{mn}$ is permittivity, $\mathbf{E}$ is electric field strength, $S_{ij}$ is strain, $s_{ijkl}$ is compliance and $\sigma_{kl}$ is stress, $d_{mkl}$ is piezoelectric constant.

It is apparent that some composite materials could be designed so as to attain
 properties desired in industry. In this way, the best properties from each constituent phase within the
composite may be utilized to create an improved material. In general, by replacing a
portion of a piezoceramic with a lightweight, flexible, polymer, the resulting
density, acoustic impedance, mechanical quality factor and dielectric constant can be
decreased. If the phases can be arranged so the piezoelectric charge coefficients of the composite are maintained at reasonable levels, its voltage coefficients
can be substantially improved. As a result, the merit of
the piezocomposites can actually surpass those of single phase materials.

Therefore, according to aspects mentioned above, multi-scale modeling is necessary to describe the behavior of piezoelectric composites for the following two reasons. On one hand, in engineering, we are interested in relationship between micro-structure of
the composite materials and macro-properties of the composite materials because in this way we can make the composite materials with desired macro-properties by controlling its micro-structure. On the other hand, in numerical computation, the computation of the parameters describing a micro-nonhomogeneous medium is an extremely difficult task, since the coefficients of the corresponding differential equations are given by rapidly oscillating functions. Many methods has been raised to get the multi-scale model, such as MsFEM, RVE, HMM and etc.

A number of work have been done in this field by former researchers. Most of them focus on the piezoelectric composite materials with some special microstructures. The modeling of the piezoelectric composites with shell, perforated, periodic micro-structure with theoretical analysis has been held by \citet{Gerges} and \citet{bone}.
The modeling of $1-3$ type piezoelectric composite without theoretical analysis has been held by \citet{Harald} using RVE.

In this paper, a multi-scale model under the framework of HMM is designed. Corresponding theoretical analysis is derived. Differing from the elliptic equations whose solution is a minimum point of its energy functional, the solution of piezoelectric equations is a saddle point
of its energy functional, which is caused by the coupling of mechanical field and electrical field.
This difference brought main obstacle for theoretical analysis. We also deduced the macroscopic model from the microscopic model of piezoelectric composite material for materials with periodic microstructure and carried out the corresponding convergence analysis under the framework of homogenization. For the analysis of convergence, noticing that the regularity of the solutions of cell
problems in the model cannot reach $W^{1,\infty}$, which makes the traditional treatment to this kind of
problem not work on our problem. To solve this problem, we employ two useful lemmas, whose original idea is from \citet{Suslina} under the assumption that the regularity of solutions of cell problems in the model can reach $L^{\infty}$. Moreover, error
estimate between HMM solutions and homogenization solutions is derived. A 3-D numerical example was given out to verify the error estimate.

The rest of the paper is organized as follows. In Section 2, we introduction the multi-scale problem of piezoelectric composite material briefly. In Section 3, a multi-scale model under the framework of HMM is designed. The macroscopic model from the microscopic model of piezoelectric composite material for materials with periodic microstructure is derived by asymptotic expansion in Section 4. We also give out the corresponding convergence analysis for the model we derived under the framework of homogenization in Section 4. In Section 5, both error
estimate between HMM solutions and homogenization solutions and error estimate of the effective coefficients is derived. Numerical examples are employed to verify the error estimate in Section 6. The paper concludes in Section 7.

\section{Piezoelectric Equations}
Under the action of applied volume loading $\mathbf{f}\in L^2(\Omega)$ and without electric charges, the electroelastic state of piezoelectric medium is governed by the following system of equations.
\begin{enumerate}
\item[$\bullet$]Equations of motion:
\begin{align}
-\frac{\partial{\sigma_{ij}}}{\partial{x_{j}}}=f_{i}\qquad \mbox{in\quad $\Omega$}
\end{align}
\item[$\bullet$]Maxwell's equations (in the quasistatic approximation):
\begin{align}
        \frac{\partial{D_{i}}}{\partial{x_{i}}}={0}\qquad E_{i}=-\frac{\partial\varphi}{\partial{x_{i}}}\qquad \mbox{in\quad $ \Omega$}
\end{align}
\item[$\bullet$]Constitutive relations:
\begin{align}
        \sigma_{ij} & =  c_{ijkn}\frac{\partial{u_{k}}}{\partial{x_{n}}}-e_{kij}E_{k}\\
        D_{i}       & =  e_{ikn}\frac{\partial{u_{k}}}{\partial{x_{n}}}+\epsilon_{ij}E_{j}
\end{align}
\end{enumerate}
Without lose of generalization, we just take the Dirichlet Boundary condition as follows.
\begin{align}
  u_{i}=0,\quad\varphi=0\qquad\mbox{on\quad $\partial\Omega$}
  \end{align}
where $\sigma_{ij}$ is stress tensor, $\mathbf{u}$ is elastic displacement field, $\mathbf{D}$ is electrical displacement, $\mathbf{E}$ is electric field, $\varphi$ is potential field.

Hence, characteristics of the piezoelectric material are given by elastic tensor $(c_{ijkn})$, dielectric tensor $(\epsilon_{ij})$ and piezoelectric tensor $(e_{kij})$. These three tensors have the following properties:
\begin{enumerate}
\item[$\bullet$]The elastic tensor $(c_{ijkn})$ is symmetric and positive defined, that is,
\begin{equation*}
c_{ijkl}=c_{jikl}=c_{klij}
\end{equation*}
and there exists $\alpha >0$, such that $c_{ijkl}X_{ij}X_{kl}\geq\alpha X_{ij}X_{kl}$, $\forall X_{ij}=X_{ji}\in R$
\item[$\bullet$]The dielectric tensor $(\epsilon_{ij})$ is symmetric and positive defined, that is,
\begin{equation*}
\epsilon_{ij}=\epsilon_{ij}
\end{equation*}
and there exists $\beta>0$, such that $\epsilon_{ij}X_{i}X_{j}\geq\beta X_{i}X_{j}$, $\forall X_{i}\in R$
\item[$\bullet$]The piezoelectric tensor $(e_{kij})$ is symmetric in the sense that $e_{kij}=e_{kji}$
\item[$\bullet$]$c_{ijkl} \,\in\, L^{\infty}(\Omega)$, $d_{ij}\,\in\, L^{\infty}(\Omega)$, $e_{ijk}\,\in\, L^{\infty}(\Omega)$
\end{enumerate}
Thus, in the rest parts of this paper, we consider the following equation system,
\begin{align}\label{micro equ}
            \left\{\begin{array} {l@{\quad}l} \displaystyle
                -\frac{\partial{\sigma_{ij}^{(\varepsilon)}}}{\partial{x_{j}}}=f_{i}\qquad
                &\mbox{ ${x}\in\Omega\subset R^{d}$}\\
                \displaystyle    \frac{\partial{D_{i}^{(\varepsilon)}}}{\partial{x_{i}}}={0}\qquad&\mbox{ ${x}\in \Omega\subset R^{d}$}\\
                \displaystyle    \sigma_{ij}^{(\varepsilon)}=c_{ijkn}^{(\varepsilon)}\frac{\partial{u_{k}^{(\varepsilon)}}}{\partial{x_{n}}}+e_{kij}^{(\varepsilon)}\frac{\partial{\varphi^{(\varepsilon)}}}{\partial{x_{k}}}\\
                \displaystyle
                D_{i}^{(\varepsilon)}=e_{ikn}^{(\varepsilon)}\frac{\partial{u_{k}^{(\varepsilon)}}}{\partial{x_{n}}}-\epsilon_{ij}^{(\varepsilon)}\frac{\partial{\varphi^{(\varepsilon)}}}{\partial{x_{j}}}\\
            \displaystyle
            u_{i}^{(\varepsilon)}=0��\qquad
            \varphi^{(\varepsilon)}=0\qquad&\mbox{  $x\in\partial\Omega$}
            \end{array}\right.
\end{align}
where $\varepsilon\ll 1$ is the characteristic non-homogeneity dimension.

Moreover, through this paper, Latin indices and exponents take their values in the set \{1,2,3\}, if there is no special illustration. The average symbol $\langle\quad\cdot\quad\rangle_\Box$ is defined as
\begin{align*}
\langle\quad\cdot\quad\rangle_\Box\ = \frac{1}{|\Box|}\int_{\square}\cdot \quad dy,
\end{align*}
where $|\Box|$ is the volume of $\Box$. Let $\Omega$ be a polyhedral domain in $R^{d}$ with a
Lipschitz boundary $\partial \Omega$ whose unit outer normal is
denoted by $\bf n$. In the derivations below, we use
$L^{2}(\Omega)$ based Sobolev spaces $H^{k}(\Omega)$ equipped with
norms and seminorms
$$\parallel u\parallel_{k,\Omega}=\left(\int_{\Omega}\sum_{|\alpha|\leq k}|D^{\alpha}u|^{2}\right)^{1/2},
\qquad \mid u\mid_{k,\Omega}=\left(\int_{\Omega}\sum_{|\alpha|=
k}|D^{\alpha}u|^{2}\right)^{1/2} $$.

\section{HMM modeling of piezoelectric composite materials}

HMM (Heterogeneous multiscale method) by \citet{E} is a general methodology
for designing sublinear algorithms by exploiting scale separation and other
special features of the problem. It consists of two components: selection of a macroscopic
solver and estimating the missing macroscale data by solving locally the fine
scale problem.

For (\ref{micro equ}), the macroscopic solver can be chosen as standard piecewise linear finite element method on a triangulation $\mathscr{T}_{H}$ of element size H which should resolve the macro-scale features of $c_{ijkn}^{(\varepsilon)},e_{kij}^{(\varepsilon)}$ and $\epsilon_{ij}^{(\varepsilon)}$. The missing data are the effective stiffness matrices at this scale.
Assuming that the effective coefficients at this scale are $c^{H}_{ijkn}$, $e^{H}_{ijk}$ and $\epsilon^{H}_{ij}$, if we know $c^{H}_{ijkn}$, $e^{H}_{ijk}$ and $\epsilon^{H}_{ij}$ explicitly, macroscopic piezoelectric equation system could be solved by FEM in the following variational form.

Find $\mathbf{u}\in X_{H}\times X_{H}\times X_{H}$,$\varphi\in X_{H}$ s.t.
    \begin{align}\label{hmm macroweak}
      \left\{\begin{array} {l@{\quad}l} \displaystyle
          c_{H}(\mathbf{u},\mathbf{v})+e_{H}(\mathbf{v},\varphi)=(\mathbf{f},\mathbf{v}),\qquad&\forall\quad\mathbf{v}\in
          X_{H}\times X_{H}\times X_{H}\\
          \displaystyle
          -e_{H}(\mathbf{u},\psi)+d_{H}(\varphi,\psi)=0,\qquad
          &\forall\quad\psi\in X_{H}
        \end{array}\right.
    \end{align}
where
 \begin{align}\label{int}
      c_{H}(\mathbf{u},\mathbf{v})&=\int_{\Omega}c_{ijkl}^{H}s_{ij}(\mathbf{u})s_{kl}(\mathbf{v})dx \simeq \sum_{K\in \mathscr{T}_{H}}\mid K \mid \sum_{x_{\alpha}\in K}w_{\alpha}(c_{ijkl}^{H}s_{ij}(\mathbf{u})s_{kl}(\mathbf{v}))(x_\alpha)\nonumber\\
      e_{H}(\mathbf{v},\varphi)&=\int_{\Omega}e_{ijk}^{H}s_{ij}(\mathbf{v})\partial_{k}\varphi dx \simeq \sum_{K\in \mathscr{T}_{H}}\mid K \mid \sum_{x_{\alpha}\in K}w_{\alpha}(e_{ijk}^{H}s_{ij}(\mathbf{v})\partial_{k}\varphi)(x_\alpha)\\
      d_{H}(\varphi,\psi)&=\int_{\Omega}\epsilon_{ij}^{H}\partial_{i}\varphi\partial_{j}\psi dx \simeq \sum_{K\in \mathscr{T}_{H}}\mid K \mid \sum_{x_{\alpha}\in K}w_{\alpha}(\epsilon_{ij}^{H}\partial_{i}\varphi\partial_{j}\psi)(x_\alpha)\nonumber
    \end{align}
where ${x_\alpha}$ and ${w_\alpha}$ are the quadrature points and weights in K, $\mid K\mid$ is the volume of K.

However, we can not get $c^{H}_{ijkn}$, $e^{H}_{ijk}$ and $\epsilon^{H}_{ij}$ explicitly in most cases. In the absence of explicit knowledge of $c^{H}_{ijkn}$, $e^{H}_{ijk}$ and $\epsilon^{H}_{ij}$, we approximation $c_{H}(\mathbf{u},\mathbf{v})$, $e_{H}(\mathbf{v},\varphi)$ and $d_{H}(\varphi,\psi)$ by solving two microscopic problems as follows on the samples we have chosen.
\begin{align}\label{hmm 1}
          \left\{\begin{array} {l@{\quad}l}
          \displaystyle-\frac{\partial{\sigma_{ij}^{(\varepsilon)}}}{\partial{x_{j}}}=0\qquad&\mbox{ $x\in I_{\delta}(x_{\alpha})$}\\
           \displaystyle     \frac{\partial{D_{i}^{(\varepsilon)}}}{\partial{x_{i}}}=0 \qquad&\mbox{ $x\in I_{\delta}(x_{\alpha})$}\\
            \displaystyle    \sigma_{ij}^{(\varepsilon)}=c_{ijkn}^{(\varepsilon)}\frac{\partial{(v_{\alpha})_{k}^{(\varepsilon)}}}{\partial{x_{n}}}+e_{kij}^{(\varepsilon)}\frac{\partial{\varphi_{\alpha v}^{(\varepsilon)}}}{\partial{x_{k}}}\\
            \displaystyle D_{i}^{(\varepsilon)}=e_{ikn}^{(\varepsilon)}\frac{\partial{(v_{\alpha})_{k}^{(\varepsilon)}}}{\partial{x_{n}}}-\epsilon_{ij}^{(\varepsilon)}\frac{\partial{\varphi_{\alpha v}^{(\varepsilon)}}}{\partial{x_{j}}}\\
              \displaystyle \mathbf{v}_{\alpha}^{(\varepsilon)}=\mathbf{V}_{\alpha}(x)\qquad
              \varphi_{\alpha v}^{(\varepsilon)}=0\qquad&\mbox{ $x\in\partial I_{\delta}(x_{\alpha})$}\\
            \end{array}\right.
\end{align}
and
\begin{align}\label{hmm2}
            \left\{\begin{array} {l@{\quad}l} \displaystyle -\frac{\partial{\sigma_{ij}^{(\varepsilon)}}}{\partial{x_{j}}}=0\qquad&\mbox{ $x\in I_{\delta}(x_{\alpha})$}\\
                \displaystyle    \frac{\partial{D_{i}^{(\varepsilon)}}}{\partial{x_{i}}}=0\qquad&\mbox{ $x\in I_{\delta}(x_{\alpha})$}\\
                \displaystyle    \sigma_{ij}^{(\varepsilon)}=c_{ijkn}^{(\varepsilon)}\frac{\partial{({v}_{\alpha\varphi})_{k}^{(\varepsilon)}}}{\partial{x_{n}}}+e_{kij}^{(\varepsilon)}\frac{\partial{\varphi_{\alpha}^{(\varepsilon)}}}{\partial{x_{k}}}\\
                \displaystyle D_{i}^{(\varepsilon)}=e_{ikn}^{(\varepsilon)}\frac{\partial{({v}_{\alpha\varphi})_{k}^{(\varepsilon)}}}{\partial{x_{n}}}-\epsilon_{ij}^{(\varepsilon)}\frac{\partial{\varphi_{\alpha}^{(\varepsilon)}}}{\partial{x_{j}}}\\
                \displaystyle \mathbf{v}_{\alpha\varphi}^{(\varepsilon)}=0\qquad
                \varphi_{\alpha}^{(\varepsilon)}=\varphi_{\alpha}(x)
                \qquad&\mbox{ $x\in\partial I_{\delta}(x_{\alpha})$}\\
              \end{array}\right.
          \end{align}
where $I_{\delta}({x_{\alpha}})={x_{\alpha}}+[-\frac{\delta}{2},\frac{\delta}{2}]^{d}$ is a cubic of size $\delta$ centered at ${x_{\alpha}}$ and $\mathbf{V}_{\alpha}$ is the linear approximation of $\mathbf{v}$ at $I_{\delta}(x_{\alpha})$. For the macroscopic finite element space $X_{H}$ we have chosen, $\mathbf{V}_{\alpha}$ is $\mathbf{v}$. Similarly, $\varphi_{\alpha}(x)$ is the linear approximation of $\varphi$ at $I_{\delta}(x_{\alpha})$.\\

\begin{tabular}[t]{c}
  \begin{picture}(150,112)(-30,20)
\put(50,50){\line(2,3){50}} \put(50,50){\line(1,0){100}}
\put(150,50){\line(-2,3){50}}
\put(95,75){\small{K}}
\multiput(66,56)(10,0){2}{\line(0,1){10}}
\multiput(66,56)(0,10){2}{\line(1,0){10}} \put(71,61){\circle*{2}}
\put(78,59){\scriptsize{$I_{\delta}(x_{\alpha})$}}
\multiput(123,56)(10,0){2}{\line(0,1){10}}
\multiput(123,56)(0,10){2}{\line(1,0){10}} \put(128,61){\circle*{2}}
\multiput(94,95)(10,0){2}{\line(0,1){10}}
\multiput(94,95)(0,10){2}{\line(1,0){10}}
 \put(99,100){\circle*{2}}
\put(280,105){\vector(-1,0){150}}
\put(145,90){element}
\put(150,110){unit}
\put(-50,20){Figure 1. Illustration of HMM for solving (\ref{micro equ}). The dots are the quadrature points in (\ref{int}).}
\end{picture}
\begin{picture}(120,112)(-80,-20)
{\includegraphics[height=35mm,width =45mm]{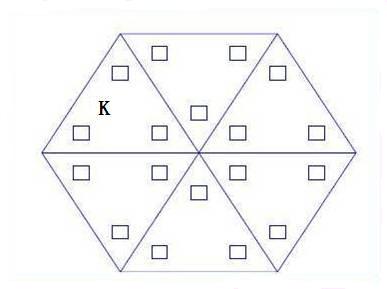}}
\end{picture}
\end{tabular}

we define the corresponding bilinear forms as follows.

For any $\mathbf{u}\in X_{H}\times X_{H}\times X_{H}$,$\mathbf{v}\in X_{H}\times X_{H}\times X_{H}$, $\varphi\in X_{H}$,$\psi\in X_{H}$
\begin{align*}
    \displaystyle c_{H}(\mathbf{u},\mathbf{v})&=\sum_{K\in \mathscr{T}_{H}}\mid K \mid \sum_{x_{\alpha}\in K}w_{\alpha}\frac{1}{\mid I_{\delta}(x_{\alpha})\mid }\int_{I_{\delta}(x_{\alpha})}(c_{ijkl}^{(\varepsilon)}s_{ij}(\mathbf{u}_{\alpha}^{(\varepsilon)})s_{kl}(\mathbf{v}_{\alpha}^{(\varepsilon)})+e_{ijk}^{(\varepsilon)}s_{ij}(\mathbf{u}_{\alpha}^{(\varepsilon)})\partial_{k}\varphi_{\alpha v}^{(\varepsilon)})dx\\
    \displaystyle e_{H}(\mathbf{v},\varphi)&=\sum_{K\in \mathscr{T}_{H}}\mid K \mid\sum_{x_{\alpha}\in K}w_{\alpha}\frac{1}{\mid I_{\delta}(x_{\alpha})\mid }\int_{I_{\delta}(x_{\alpha})}(e_{ijk}^{(\varepsilon)}s_{ij}(\mathbf{v}_{\alpha}^{(\varepsilon)})\partial_{k}\varphi_\alpha^{(\varepsilon)}-\epsilon_{ij}^{(\varepsilon)}\partial_{i}\varphi_{\alpha v}^{(\varepsilon)}\partial_{j}\varphi_\alpha^{(\varepsilon)})dx\\
     \displaystyle e_{H}(\mathbf{v},\varphi)&=\sum_{K\in \mathscr{T}_{H}}\mid K \mid \sum_{x_{\alpha}\in K}w_{\alpha}\frac{1}{\mid I_{\delta}(x_{\alpha})\mid }\int_{I_{\delta}(x_{\alpha})}(c_{ijkl}^{(\varepsilon)}s_{ij}(\mathbf{v}_{\alpha}^{(\varepsilon)})s_{kl}(\mathbf{v}_{\alpha\varphi}^{(\varepsilon)})+e_{ijk}^{(\varepsilon)}s_{ij}(\mathbf{v}_{\alpha}^{(\varepsilon)})\partial_{k}\varphi_\alpha^{(\varepsilon)})dx\\
     \displaystyle d_{H}(\varphi,\psi)&=-\sum_{K\in \mathscr{T}_{H}}\mid K \mid\sum_{x_{\alpha}\in K}w_{\alpha}\frac{1}{\mid I_{\delta}(x_{\alpha})\mid }\int_{I_{\delta}(x_{\alpha})}(e_{ijk}^{(\varepsilon)}s_{ij}(\mathbf{v}_{\alpha\varphi}^{(\varepsilon)})\partial_{k}\psi_\alpha^{(\varepsilon)}-\epsilon_{ij}^{(\varepsilon)}\partial_{i}\varphi_\alpha^{(\varepsilon)}\partial_{j}\psi_\alpha^{(\varepsilon)})dx
    \end{align*}

Moreover, cell problems of HMM held above are equal to the following cell problems. Estimation of the macroscopic effective coefficients could be given directly by solving the microscopic equation systems as follows.
\begin{align}\label{hmm modone}
      \left\{\begin{array} {l@{\quad}l} \displaystyle
          -\frac{\partial}{\partial{x_{j}}}(c_{ijnh}^{(\varepsilon)}\frac{\partial{P_{n}^{kl}}}{\partial{x_{h}}}+e_{hij}^{(\varepsilon)}\frac{\partial{\Phi_{kl}}}{\partial{x_{h}}})=0\qquad
          &\mbox{ $x\in I_{\delta}(x_{\alpha})$}\\\\
          \displaystyle
          \frac{\partial}{\partial{x_{i}}}(e_{inh}^{(\varepsilon)}\frac{\partial{P_{n}^{kl}}}{\partial
            x_{h}}-\epsilon_{ih}^{(\varepsilon)}\frac{\partial{\Phi_{kl}}}{\partial{x_{h}}})=0\qquad
          &\mbox{ $x\in I_{\delta}(x_{\alpha})$}\\\\
          \displaystyle P_{k}^{kl}=x_{l} \qquad P_{n}^{kl}=0(n\neq
          k)\qquad&\mbox{ $x\in\partial I_{\delta}(x_{\alpha})$}\\\\
          \displaystyle \Phi_{kl}=0\qquad&\mbox{ $x\in\partial
            I_{\delta}(x_{\alpha})$}
        \end{array}\right.
    \end{align}
Estimation of $c_{ijkl}^H$ and $e_{ikl}^H$ are given by
\begin{eqnarray}\label{hmm coef1}
      c^{H}_{ijkl}(x_\alpha)&=&c_{ijnh}^{H}(x_\alpha)\langle\frac{\partial P_{n}^{kl}}{\partial
        x_{h}}\rangle_{I_{\delta}(x_\alpha)}=\langle c_{ijnh}^{(\varepsilon)}\frac{\partial{P_{n}^{kl}}}{\partial{x_{h}}}+e_{hij}^{(\varepsilon)}\frac{\partial{\Phi_{kl}}}{\partial{x_{h}}}\rangle_{I_{\delta}(x_\alpha)}\\
e^{H}_{ikl}(x_\alpha)&=&e_{inh}^{H}(x_\alpha)\langle\frac{\partial P_{n}^{kl}}{\partial
        x_{h}}\rangle_{I_{\delta}(x_\alpha)}=\langle e_{inh}^{(\varepsilon)}\frac{\partial{P_{n}^{kl}}}{\partial
        x_{h}}-\epsilon_{ih}^{(\varepsilon)}\frac{\partial{\Phi_{kl}}}{\partial{x_{h}}}\rangle_{I_{\delta}(x_\alpha)}
    \end{eqnarray}
and
   \begin{align}\label{hmm modtwo}
      \left\{\begin{array} {l@{\quad}l} \displaystyle
          -\frac{\partial}{\partial{x_{j}}}(c_{ijnh}^{(\varepsilon)}\frac{\partial{Q_{n}^{l}}}{\partial{x_{h}}}+e_{hij}^{(\varepsilon)}\frac{\partial{\Psi_{l}}}{\partial{x_{h}}})=0\qquad
          &\mbox{ $x\in I_{\delta}(x_{\alpha})$}\\\\
          \displaystyle
          \frac{\partial}{\partial{x_{i}}}(e_{inh}^{(\varepsilon)}\frac{\partial{Q_{n}^{l}}}{\partial
            x_{h}}-\epsilon_{ih}^{(\varepsilon)}\frac{\partial{\Psi_{l}}}{\partial{x_{h}}})=0\qquad
          &\mbox{ $x\in I_{\delta}(x_{\alpha})$}\\\\
          \displaystyle  Q_{n}^{l}=0\qquad&\mbox{ $x\in\partial I_{\delta}(x_{\alpha})$}\\\\
          \displaystyle \Psi_{l}=x_{l}\qquad&\mbox{ $x\in\partial
            I_{\delta}(x_{\alpha})$}
        \end{array}\right.
    \end{align}
Estimation of $\epsilon_{ij}$ and $e_{ikl}$ are given by
\begin{eqnarray}\label{hmm coef2}
      \epsilon_{il}^{H}(x_\alpha)&=&\epsilon_{ij}^{H}(x_\alpha)\langle\frac{\partial\Psi_{l}}{\partial
        x_{j}}\rangle_{I_{\delta}(x_\alpha)}=-\langle e_{inh}^{(\varepsilon)}\frac{\partial{Q_{n}^{l}}}{\partial
        x_{h}}-\epsilon_{ih}^{(\varepsilon)}\frac{\partial{\Psi_{l}}}{\partial{x_{h}}}\rangle_{I_{\delta}(x_\alpha)}\\
       e_{ikl}^{H}(x_\alpha)&=& e_{hkl}^{H}(x_\alpha)\langle\frac{\partial \Psi_{i}}{\partial
        x_{h}}\rangle_{I_{\delta}(x_\alpha)}=\langle c_{klnh}^{(\varepsilon)}\frac{\partial{Q_{n}^{i}}}{\partial{x_{h}}}+e_{hkl}^{(\varepsilon)}\frac{\partial{\Psi_{i}}}{\partial{x_{h}}}\rangle_{I_{\delta}(x_\alpha)}
    \end{eqnarray}

The following results assert that estimate of the effective coefficients $c^{H}_{ijkl}(x)$, $\epsilon_{il}^{H}(x)$ and $e_{ikl}^{H}(x)$
given above share the same symmetric property and elliptic property with $c_{ijkn}^{(\varepsilon)}(x)$, $\epsilon_{ij}^{(\varepsilon)}(x)$ and $e_{ijk}^{(\varepsilon)}(x)$ as follows.
\begin{lemma}\label{hmm lem1}
If $\mathbf{P}$, $\mathbf{Q}$, $\mathbf{\Phi}$, $\mathbf{\Psi}$ are solutions of (\ref{hmm modone}) and (\ref{hmm modtwo}), $e_{ikl}^H$, $c_{ijkl}^H$ and $\epsilon_{ij}^H$ defined by (\ref{hmm coef1}) and (\ref{hmm coef2}) have the following properties,
\begin{enumerate}
\item[(I)]
\begin{align*}
& e^{H}_{ikl}=\langle e_{inh}^{(\varepsilon)}\frac{\partial{P_{n}^{kl}}}{\partial
        x_{h}}-\epsilon_{ih}^{(\varepsilon)}\frac{\partial{\Phi_{kl}}}{\partial{x_{h}}}\rangle_{I_{\delta}}=\langle c_{klnh}^{(\varepsilon)}\frac{\partial{Q_{n}^{i}}}{\partial{x_{h}}}+e_{hkl}^{(\varepsilon)}\frac{\partial{\Psi_{i}}}{\partial{x_{h}}}\rangle_{I_{\delta}}
        \end{align*}
\item[(II)]
\begin{align*}
& c^{H}_{ijkl}=c^{H}_{jikl}=c^{H}_{klij},\quad e^H_{kij}=e^H_{kji},\quad\epsilon^H_{ij}=\epsilon^H_{ij}
\end{align*}
\item[(III)]
\begin{align*}
&\exists\alpha_{H}>0,s.t.c_{ijkl}^{H}X_{ij}X_{kl}\geq\alpha_{H}X_{ij}X_{kl}\quad and \quad\exists\beta_{H}>0,s.t.\epsilon_{ij}^{H}X_{i}X_{j}\geq\beta_{H}X_{i}X_{j}
\end{align*}
\end{enumerate}
\end{lemma}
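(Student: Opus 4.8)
\quad The plan is to encode the two cell systems (\ref{hmm modone}) and (\ref{hmm modtwo}) in a single \emph{symmetrized} bilinear form on the sampling cell $I_{\delta}:=I_{\delta}(x_{\alpha})$ and then to read off (I)--(III) from its symmetry together with the pointwise properties of $c^{(\varepsilon)},e^{(\varepsilon)},\epsilon^{(\varepsilon)}$. Write $\langle\cdot\rangle=\langle\cdot\rangle_{I_{\delta}}$ and, for a scalar index $m$, let $\xi_{m}$ denote the affine function $x\mapsto x_{m}$. For $(\mathbf{u},\varphi),(\mathbf{v},\psi)\in(H^{1}(I_{\delta}))^{3}\times H^{1}(I_{\delta})$ put
\[
\mathcal B[(\mathbf{u},\varphi),(\mathbf{v},\psi)]=\int_{I_{\delta}}\Big(c^{(\varepsilon)}_{ijkn}\partial_{n}u_{k}\,\partial_{j}v_{i}+e^{(\varepsilon)}_{kij}\partial_{k}\varphi\,\partial_{j}v_{i}+e^{(\varepsilon)}_{ikn}\partial_{n}u_{k}\,\partial_{i}\psi-\epsilon^{(\varepsilon)}_{ij}\partial_{j}\varphi\,\partial_{i}\psi\Big)\,dx .
\]
Flipping the sign of the potential in the second slot is precisely what makes $\mathcal B$ symmetric: using $c^{(\varepsilon)}_{ijkn}=c^{(\varepsilon)}_{knij}$, $\epsilon^{(\varepsilon)}_{ij}=\epsilon^{(\varepsilon)}_{ji}$ and relabelling the dummy indices of the two $e^{(\varepsilon)}$-terms one gets $\mathcal B[(\mathbf{u},\varphi),(\mathbf{v},\psi)]=\mathcal B[(\mathbf{v},\psi),(\mathbf{u},\varphi)]$. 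Rewriting (\ref{hmm modone}), (\ref{hmm modtwo}) in weak form with $H_{0}^{1}(I_{\delta})$ test functions shows they say exactly
\[
\mathcal B[(\mathbf{P}^{kl},\Phi_{kl}),(\mathbf{w},\eta)]=0=\mathcal B[(\mathbf{Q}^{l},\Psi_{l}),(\mathbf{w},\eta)]\qquad\forall\,(\mathbf{w},\eta)\in(H_{0}^{1}(I_{\delta}))^{3}\times H_{0}^{1}(I_{\delta}),
\]
where $\mathbf{P}^{kl}$ is understood extended by its affine boundary value and where the ``correctors'' $\mathbf{P}^{kl}-\mathbf{e}_{k}\xi_{l}$, $\Phi_{kl}$, $\mathbf{Q}^{l}$, $\Psi_{l}-\xi_{l}$ all belong to this test space.

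For parts (I) and (II), the minor symmetries $c^{H}_{ijkl}=c^{H}_{jikl}$ and $e^{H}_{kij}=e^{H}_{kji}$ hold pointwise because in the integrands of (\ref{hmm coef1})--(\ref{hmm coef2}) the factors $c^{(\varepsilon)}_{ijnh},e^{(\varepsilon)}_{hij}$ (respectively $c^{(\varepsilon)}_{klnh},e^{(\varepsilon)}_{hkl}$ in the second expression for $e^{H}$) are symmetric in the relevant index pair, so the averages inherit the symmetry. The remaining statements follow from one device: for any pair of cell solutions, write the second argument of $\mathcal B$ as ``$H_{0}^{1}$-corrector plus affine part'' and use the orthogonality above to discard the corrector; then choosing the affine part as $\mathbf{v}=\mathbf{e}_{k}\xi_{l},\psi=0$ or $\mathbf{v}=0,\psi=\xi_{m}$ collapses the remaining integral to exactly one of the averages in (\ref{hmm coef1})--(\ref{hmm coef2}). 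Taking both solutions from (\ref{hmm modone}) gives $\mathcal B[(\mathbf{P}^{ij},\Phi_{ij}),(\mathbf{P}^{kl},\Phi_{kl})]=|I_{\delta}|\,c^{H}_{klij}$ and, by symmetry of $\mathcal B$, also $=|I_{\delta}|\,c^{H}_{ijkl}$, hence $c^{H}_{ijkl}=c^{H}_{klij}$; taking both from (\ref{hmm modtwo}) gives $\epsilon^{H}_{ij}=\epsilon^{H}_{ji}$; taking one from each gives $e^{H}_{ikl}=\langle e^{(\varepsilon)}_{inh}\partial_{h}P^{kl}_{n}-\epsilon^{(\varepsilon)}_{ih}\partial_{h}\Phi_{kl}\rangle=\langle c^{(\varepsilon)}_{klnh}\partial_{h}Q^{i}_{n}+e^{(\varepsilon)}_{hkl}\partial_{h}\Psi_{i}\rangle$, which is (I).

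For part (III), fix a symmetric matrix $X=(X_{kl})$ and set $\mathbf{P}^{X}=X_{kl}\mathbf{P}^{kl}$, $\Phi^{X}=X_{kl}\Phi_{kl}$, and let $\mathbf{l}^{X}$ be the linear field with components $l^{X}_{i}=X_{il}\xi_{l}$. By linearity $(\mathbf{P}^{X},\Phi^{X})$ solves the piezoelectric system on $I_{\delta}$ with $\mathbf{P}^{X}=\mathbf{l}^{X}$, $\Phi^{X}=0$ on $\partial I_{\delta}$, and $c^{H}_{ijkl}X_{ij}X_{kl}=\langle X_{ij}\,(c^{(\varepsilon)}_{ijnh}\partial_{h}P^{X}_{n}+e^{(\varepsilon)}_{hij}\partial_{h}\Phi^{X})\rangle$. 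Testing the mechanical equation with $\mathbf{P}^{X}-\mathbf{l}^{X}\in(H_{0}^{1})^{3}$ replaces $X_{ij}$ by $\partial_{j}P^{X}_{i}$ inside this average; testing the electric equation with $\Phi^{X}\in H_{0}^{1}$ converts the resulting cross $e^{(\varepsilon)}$-term into $\int_{I_{\delta}}\epsilon^{(\varepsilon)}_{ih}\partial_{h}\Phi^{X}\partial_{i}\Phi^{X}\,dx$, and one is left with the energy identity
\[
c^{H}_{ijkl}X_{ij}X_{kl}=\frac{1}{|I_{\delta}|}\Big(\int_{I_{\delta}}c^{(\varepsilon)}_{ijnh}s_{nh}(\mathbf{P}^{X})s_{ij}(\mathbf{P}^{X})\,dx+\int_{I_{\delta}}\epsilon^{(\varepsilon)}_{ih}\partial_{h}\Phi^{X}\partial_{i}\Phi^{X}\,dx\Big).
\]
Both integrals are nonnegative; dropping the second, using the ellipticity of $c^{(\varepsilon)}$ and then Jensen's inequality $\langle|s(\mathbf{P}^{X})|^{2}\rangle\ge|\langle s(\mathbf{P}^{X})\rangle|^{2}$ together with $\langle s_{ij}(\mathbf{P}^{X})\rangle=X_{ij}$ (the corrector gradient has zero mean because the corrector is $H_{0}^{1}$) gives $c^{H}_{ijkl}X_{ij}X_{kl}\ge\alpha\,X_{ij}X_{ij}$, so $\alpha_{H}=\alpha$ works. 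The bound for $\epsilon^{H}$ is the mirror image: with $\mathbf{Q}^{X}=X_{l}\mathbf{Q}^{l}$, $\Psi^{X}=X_{l}\Psi_{l}$, testing the mechanical equation with $\mathbf{Q}^{X}$ and the electric one with $\Psi^{X}-X_{l}\xi_{l}$ yields $\epsilon^{H}_{il}X_{i}X_{l}=\frac{1}{|I_{\delta}|}(\int_{I_{\delta}}c^{(\varepsilon)}_{ijnh}s_{nh}(\mathbf{Q}^{X})s_{ij}(\mathbf{Q}^{X})\,dx+\int_{I_{\delta}}\epsilon^{(\varepsilon)}_{ih}\partial_{h}\Psi^{X}\partial_{i}\Psi^{X}\,dx)\ge\beta\,X_{i}X_{i}$, so $\beta_{H}=\beta$.

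The main obstacle is the saddle-point structure emphasised in the introduction: since the piezoelectric operator is not self-adjoint, the form to symmetrize with is $\mathcal B$ (potential sign flipped in one slot), and one must keep straight that the first cell problem carries affine mechanical data with zero potential while the second carries zero mechanical data with affine potential, so that the ``corrector'' sits in different components in the two cases. The only genuinely computational points are verifying the symmetry of $\mathcal B$ by index relabelling and, in the energy identity, using the electric cell equation to turn the cross $e^{(\varepsilon)}$-term into the nonnegative $\epsilon^{(\varepsilon)}$-energy; everything else is integration by parts, the pointwise symmetry and ellipticity of $c^{(\varepsilon)},e^{(\varepsilon)},\epsilon^{(\varepsilon)}$, and Jensen's inequality.
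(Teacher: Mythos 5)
Your proof is correct. For (I) and (II) it is essentially the paper's computation in different clothing: the paper passes to the zero--boundary correctors $\mathbf{N}^{kl}=\mathbf{P}^{kl}-\mathbf{e}_k x_l$, $\psi_l=\Psi_l-x_l$, writes the two weak forms (\ref{weak lemone1})--(\ref{weak lemone2}) and chooses the cross test functions $(\mathbf{Q}^i,\psi_i)$ and $(\mathbf{N}^{kl},\Phi_{kl})$; your symmetric form $\mathcal B$ with the sign of the potential flipped in one slot, together with the ``discard the $H^1_0$ corrector, keep the affine part'' device, encodes exactly the same test-function choices and coefficient symmetries, just more compactly. Where you genuinely diverge is (III): the paper proves $c^H_{ijkl}X_{ij}X_{kl}=2I[\mathbf{N},\Phi]>0$ via the saddle-point property of the energy functional and then extracts $\alpha_H$ by continuity and compactness of $\Psi(X)=c^H_{ijkl}X_{ij}X_{kl}$ on the unit sphere of symmetric matrices, which yields only the existence of some $\alpha_H>0$ (a priori attached to the particular sampling cell); you instead use the energy identity $c^H_{ijkl}X_{ij}X_{kl}=\langle c^{(\varepsilon)}s(\mathbf{P}^X):s(\mathbf{P}^X)\rangle+\langle\epsilon^{(\varepsilon)}\nabla\Phi^X\cdot\nabla\Phi^X\rangle$ (which is the second display in the paper's proof of (II), contracted with $X\otimes X$) together with the zero mean of corrector gradients and Jensen's inequality, giving the explicit uniform bounds $\alpha_H=\alpha$, $\beta_H=\beta$ independent of $\varepsilon$, $\delta$ and the sample point $x_\alpha$. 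This is a cleaner and slightly stronger conclusion than the paper's, and the mirror argument you give for $\epsilon^H$ (test the electric equation with $\Psi^X-X_l\xi_l$, the mechanical one with $\mathbf{Q}^X$) checks out, including the sign of the cross $e^{(\varepsilon)}$-term.
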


\begin{proof}
The main idea for proving (I) and (II) is to take proper test functions in the variation forms of (\ref{hmm modone}) and (\ref{hmm modtwo}). The main idea for proving (III) is to make full use of property of energy functional of piezoelectric equation system, which is caused by the coupling of
mechanical field and electric field.\\
In (\ref{hmm modone}), taking
\begin{align*}
N_{n}^{kl}=P_{n}^{kl}\quad for\quad n\neq k\quad and\quad N_{k}^{kl}=P_{k}^{kl}-x_l
\end{align*}
we have,
\begin{align}\label{hmm lemone1}
      \left\{\begin{array} {l@{\quad}l} \displaystyle
          -\frac{\partial}{\partial{x_{j}}}(c_{ijnh}^{(\varepsilon)}\frac{\partial{N_{n}^{kl}}}{\partial{x_{h}}}+e_{hij}^{(\varepsilon)}\frac{\partial{\Phi_{kl}}}{\partial{x_{h}}})=-\frac{\partial{c_{ijkl}^{(\varepsilon)}}}{\partial{x_{j}}}\qquad
          &\mbox{ $x\in I_{\delta}(x_{\alpha})$}\\\\
          \displaystyle
          \frac{\partial}{\partial{x_{i}}}(e_{inh}^{(\varepsilon)}\frac{\partial{N_{n}^{kl}}}{\partial
            x_{h}}-\epsilon_{ih}^{(\varepsilon)}\frac{\partial{\Phi_{kl}}}{\partial{x_{h}}})=-\frac{\partial{e_{ikl}^{(\varepsilon)}}}{\partial{x_{i}}}\qquad
          &\mbox{ $x\in I_{\delta}(x_{\alpha})$}\\\\
          \displaystyle N_{n}^{kl}=0\qquad \Phi_{kl}=0\qquad&\mbox{ $x\in\partial
            I_{\delta}(x_{\alpha})$}
        \end{array}\right.
    \end{align}
The variational form of (\ref{hmm lemone1}) is,
\begin{align}\label{weak lemone1}
\left\{\begin{array} {l@{\quad}l} \displaystyle \int_{I_\delta(x_\alpha)}(c_{ijnh}^{(\varepsilon)}\frac{\partial{N_{n}^{kl}}}{\partial{x_{h}}}\frac{\partial{G_{i}}}{\partial{x_{j}}}+e_{hij}^{(\varepsilon)}\frac{\partial{\Phi_{kl}}}{\partial{x_{h}}}\frac{\partial{G_{i}}}{\partial{x_{j}}})dx
=\int_{I_\delta(x_\alpha)}c_{ijkl}^{(\varepsilon)}\frac{\partial{G_{i}}}{\partial{x_{j}}}dx\\\\
\displaystyle -\int_{I_\delta(x_\alpha)}(e_{inh}^{(\varepsilon)}\frac{\partial{N_{n}^{kl}}}{\partial
            x_{h}}\frac{\partial{\zeta}}{\partial{x_{i}}}-\epsilon_{ih}^{(\varepsilon)}\frac{\partial{\Phi_{kl}}}{\partial{x_{h}}}\frac{\partial{\zeta}}{\partial{x_{i}}})dx
            =\int_{I_\delta(x_\alpha)}e_{ikl}^{(\varepsilon)}\frac{\partial{\zeta}}{\partial{x_{i}}}dx\end{array}\right.
\end{align}
for any $\mathbf{G}\in H^1_0(I_\delta)\times H^1_0(I_\delta)\times H^1_0(I_\delta)$ and $\zeta\in H^1_0(I_\delta)$.\\
In (\ref{hmm modtwo}), taking $\psi_l=\Psi_l-x_l$, we have,
 \begin{align}\label{hmm lemone2}
      \left\{\begin{array} {l@{\quad}l} \displaystyle
          -\frac{\partial}{\partial{x_{j}}}(c_{ijnh}^{(\varepsilon)}\frac{\partial{Q_{n}^{l}}}{\partial{x_{h}}}+e_{hij}^{(\varepsilon)}\frac{\partial{\psi_{l}}}{\partial{x_{h}}})=-\frac{\partial e^{(\varepsilon)}_{lij} }{\partial x_j}\qquad
          &\mbox{ $x\in I_{\delta}(x_{\alpha})$}\\\\
          \displaystyle
          \frac{\partial}{\partial{x_{i}}}(e_{inh}^{(\varepsilon)}\frac{\partial{Q_{n}^{l}}}{\partial
            x_{h}}-\epsilon_{ih}^{(\varepsilon)}\frac{\partial{\psi_{l}}}{\partial{x_{h}}})=\frac{\partial\epsilon_{il}^{(\varepsilon)}}{\partial x_i}\qquad
          &\mbox{ $x\in I_{\delta}(x_{\alpha})$}\\\\
          \displaystyle  Q_{n}^{l}=0\qquad \psi_{l}=0\qquad&\mbox{ $x\in\partial
            I_{\delta}(x_{\alpha})$}
        \end{array}\right.
    \end{align}
The variational form of (\ref{hmm lemone2}) is,
\begin{align}\label{weak lemone2}
\left\{\begin{array} {l@{\quad}l}\displaystyle
\int_{I_\delta(x_\alpha)}(c_{ijnh}^{(\varepsilon)}\frac{\partial{Q_{n}^{l}}}{\partial{x_{h}}}\frac{\partial{F_{i}}}{\partial{x_{j}}}+e_{hij}^{(\varepsilon)}\frac{\partial{\psi_{l}}}{\partial{x_{h}}}\frac{\partial{F_{i}}}{\partial{x_{j}}})dx
=\int_{I_\delta(x_\alpha)}e^{(\varepsilon)}_{lij}\frac{\partial F_{i} }{\partial x_j}dx\\\\
\displaystyle\int_{I_\delta(x_\alpha)}(e_{inh}^{(\varepsilon)}\frac{\partial{Q_{n}^{l}}}{\partial
            x_{h}}\frac{\partial{\xi}}{\partial{x_{i}}}-\epsilon_{ih}^{(\varepsilon)}\frac{\partial{\psi_{l}}}{\partial{x_{h}}}\frac{\partial{\xi}}{\partial{x_{i}}})dx
=\int_{I_\delta(x_\alpha)}\epsilon_{il}^{(\varepsilon)}\frac{\partial\xi}{\partial x_i}
\end{array}\right.
\end{align}
for any $\mathbf{F}\in H^1_0(I_\delta(x_\alpha))\times H^1_0(I_\delta(x_\alpha))\times H^1_0(I_\delta(x_\alpha))$ and $\xi\in H^1_0(I_\delta(x_\alpha))$\\
(I) Taking $(\mathbf{G},\zeta)=(\mathbf{Q}^{i},\psi_{i})$ in (\ref{weak lemone1}), we have
\begin{align}\label{weak lemone12}
\left\{\begin{array} {l@{\quad}l} \displaystyle \int_{I_\delta(x_\alpha)}(c_{nhsm}^{(\varepsilon)}\frac{\partial{N_{s}^{kl}}}{\partial{x_{m}}}\frac{\partial{Q^i_{n}}}{\partial{x_{h}}}+e_{mnh}^{(\varepsilon)}\frac{\partial{\Phi_{kl}}}{\partial{x_{m}}}\frac{\partial{Q^i_{n}}}{\partial{x_{h}}})dx
=\int_{I_\delta(x_\alpha)}c_{nhkl}^{(\varepsilon)}\frac{\partial{Q^i_{n}}}{\partial{x_{h}}}dx\\\\
\displaystyle -\int_{I_\delta(x_\alpha)}(e_{hnm}^{(\varepsilon)}\frac{\partial{N_{n}^{kl}}}{\partial
            x_{m}}\frac{\partial{\psi_i}}{\partial{x_{h}}}-\epsilon_{hm}^{(\varepsilon)}\frac{\partial{\Phi_{kl}}}{\partial{x_{m}}}\frac{\partial{\psi_i}}{\partial{x_{h}}})dx
            =\int_{I_\delta(x_\alpha)}e_{ikl}^{(\varepsilon)}\frac{\partial{\psi_i}}{\partial{x_{i}}}dx\end{array}\right.
\end{align}
Taking $(\mathbf{F},\xi)=(\mathbf{N}^{kl},\Phi_{kl})$ in (\ref{weak lemone2}), we obtain
\begin{align}\label{weak lemone22}
\left\{\begin{array} {l@{\quad}l}\displaystyle
\int_{I_\delta(x_\alpha)}(c_{nhsm}^{(\varepsilon)}\frac{\partial{Q_{s}^{i}}}{\partial{x_{m}}}\frac{\partial{N^{kl}_{n}}}{\partial{x_{h}}}+e_{mnh}^{(\varepsilon)}\frac{\partial{\psi_{i}}}{\partial{x_{m}}}\frac{\partial{N^{kl}_{n}}}{\partial{x_{h}}})dx
=\int_{I_\delta(x_\alpha)}e^{(\varepsilon)}_{inh}\frac{\partial N^{kl}_{n} }{\partial x_h}dx\\\\
\displaystyle\int_{I_\delta(x_\alpha)}(e_{hsm}^{(\varepsilon)}\frac{\partial{Q_{s}^{i}}}{\partial
            x_{m}}\frac{\partial{\Phi_{kl}}}{\partial{x_{h}}}-\epsilon_{hm}^{(\varepsilon)}\frac{\partial{\psi_{i}}}{\partial{x_{m}}}\frac{\partial{\Phi_{kl}}}{\partial{x_{h}}})dx
=\int_{I_\delta(x_\alpha)}\epsilon_{ih}^{(\varepsilon)}\frac{\partial\Phi_{kl}}{\partial x_h}dx
\end{array}\right.
\end{align}
Since
\begin{align}
\langle e_{inh}^{(\varepsilon)}\frac{\partial{P_{n}^{kl}}}{\partial
        x_{h}}-\epsilon_{ih}^{(\varepsilon)}\frac{\partial{\Phi_{kl}}}{\partial{x_{h}}}\rangle_{I_{\delta}(x_\alpha)}=\langle e_{ikl}^{(\varepsilon)}+e_{inh}^{(\varepsilon)}\frac{\partial{N_{n}^{kl}}}{\partial
        x_{h}}-\epsilon_{ih}^{(\varepsilon)}\frac{\partial{\Phi_{kl}}}{\partial{x_{h}}}\rangle_{I_{\delta}(x_\alpha)}\\
\langle c_{klnh}^{(\varepsilon)}\frac{\partial{Q_{n}^{i}}}{\partial{x_{h}}}+e_{hkl}^{(\varepsilon)}\frac{\partial{\Psi_{i}}}{\partial{x_{h}}}\rangle_{I_{\delta}(x_\alpha)}
=\langle e_{ikl}^{(\varepsilon)}+c_{klnh}^{(\varepsilon)}\frac{\partial{Q_{n}^{i}}}{\partial{x_{h}}}+e_{hkl}^{(\varepsilon)}\frac{\partial{\psi_{i}}}{\partial{x_{h}}}\rangle_{I_{\delta}(x_\alpha)},
\end{align}
we get
\begin{align} e^{H}_{ikl}=\langle e_{inh}^{(\varepsilon)}\frac{\partial{P_{n}^{kl}}}{\partial
        x_{h}}-\epsilon_{ih}^{(\varepsilon)}\frac{\partial{\Phi_{kl}}}{\partial{x_{h}}}\rangle_{I_{\delta}}=\langle c_{klnh}^{(\varepsilon)}\frac{\partial{Q_{n}^{i}}}{\partial{x_{h}}}+e_{hkl}^{(\varepsilon)}\frac{\partial{\Psi_{i}}}{\partial{x_{h}}}\rangle_{I_{\delta}}
\end{align}
by the symmetric property of $c_{ijkn}^{(\varepsilon)}(x),\epsilon_{ij}^{(\varepsilon)}(x)$,$e_{ijk}^{(\varepsilon)}(x)$ and subtracting of (\ref{weak lemone12}) and (\ref{weak lemone22}) respectively.\\
(II) Taking $(\mathbf{G},\zeta)=(\mathbf{N}^{ij},\Phi_{ij})$ in (\ref{weak lemone1}) we obtain
\begin{align}
c_{ijkl}^{H}&=<c_{smnh}^{(\varepsilon)}\frac{\partial{P_{n}^{kl}}}{\partial{x_{h}}}\frac{\partial{P_{s}^{ij}}}{\partial{x_{m}}}+e_{hsm}^{(\varepsilon)}\frac{\partial{\Phi_{kl}}}{\partial{x_{h}}}\frac{\partial{P_{s}^{ij}}}{\partial{x_{m}}}>_{I_{\delta}(x_\alpha)}\\
&=<c_{smnh}^{(\varepsilon)}\frac{\partial{P_{n}^{kl}}}{\partial{x_{h}}}\frac{\partial{P_{s}^{ij}}}{\partial{x_{m}}}+\epsilon_{hm}^{(\varepsilon)}\frac{\partial{\Phi_{kl}}}{\partial{x_{h}}}\frac{\partial{\Phi_{ij}}}{\partial{x_{m}}}>_{I_{\delta}(x_\alpha)}
\end{align}
Therefore, $c_{ijkl}^{H}=c_{klij}^{H}$\\
Since
\begin{align}\label{hmm proofc}
c^H_{ijkl}=\langle c_{ijnh}^{(\varepsilon)}\frac{\partial{P_{n}^{kl}}}{\partial{x_{h}}}+e_{hij}^{(\varepsilon)}\frac{\partial{\Phi_{kl}}}{\partial{x_{h}}}\rangle_{I_{\delta}(x_\alpha)}
=\langle c_{ijkl}^{(\varepsilon)}+ c_{ijnh}^{(\varepsilon)}\frac{\partial{N_{n}^{kl}}}{\partial{x_{h}}}+e_{hij}^{(\varepsilon)}\frac{\partial{\Phi_{kl}}}{\partial{x_{h}}}\rangle_{I_{\delta}(x_\alpha)}
\end{align}
we have $c_{ijkl}^{H}=c_{jikl}^{H}=c_{ijlk}^{H}$ by the symmetric property of $c_{ijkn}^{(\varepsilon)}(x),\epsilon_{ij}^{(\varepsilon)}(x)$,$e_{ijk}^{(\varepsilon)}(x)$ in (\ref{hmm lemone1}).\\
Similarly, it is easy to prove that $e^H_{kij}=e^H_{kji}$, $\epsilon^H_{ij}=\epsilon^H_{ij}$\\
(III) By (\ref{hmm proofc}), for any $\mathbf{X}=(X_{ij})\neq \mathbf{0}$, we have
      \begin{align*}
       &&c_{ijkl}^{H}X_{ij}X_{kl}=<c_{ijkl}^{(\varepsilon)}X_{ij}X_{kl}+c_{ijnh}^{(\varepsilon)}\frac{\partial{N_{n}^{kl}}}{\partial{x_{h}}}X_{ij}X_{kl}+e_{hij}^{(\varepsilon)}\frac{\partial{\Phi_{kl}}}{\partial{x_{h}}}X_{ij}X_{kl}>_{I_{\delta}(x_\alpha)} \end{align*}
Setting $(\mathbf{N},\Phi)=(N_{n}^{kl}X_{kl},\Phi_{kl}X_{kl})$ in (\ref{weak lemone1}), we obtain
 \begin{align}\label{hmm pp}
   \left\{\begin{array} {l@{\quad}l} \displaystyle
   \int_{I_{\delta}(x_\alpha)}({c_{ijnh}^{(\varepsilon)}\frac{\partial{N_{n}}}{\partial{x_{h}}}+e_{hij}^{(\varepsilon)}\frac{\partial{\Phi}}{\partial{x_{h}}}})\frac{\partial G_{i}}{\partial x_{j}}=-\int_{I_{\delta}(x_\alpha)}c_{ijkl}^{(\varepsilon)}\frac{\partial G_{i}}{\partial x_{j}}X_{kl}\\\\
   \displaystyle
    \int_{I_{\delta}(x_\alpha)}({e_{inh}^{(\varepsilon)}\frac{\partial{N_{n}}}{\partial{x_{h}}}-\epsilon_{ih}^{(\varepsilon)}\frac{\partial{\Phi}}{\partial{x_{h}}}})\frac{\partial \zeta}{\partial x_{i}}=-\int_{I_{\delta}(x_\alpha)}e_{ikl}^{(\varepsilon)}\frac{\partial \zeta}{\partial x_{i}}X_{kl}
     \end{array}\right.
  \end{align}
Since $(\mathbf{N},\Phi)$ is the saddle point associated to the energy functional defined by
\begin{align}
   I[\mathbf{V},\Psi]=
   &\frac{1}{2}\int_{I_{\delta}(x_\alpha)}c_{ijnh}^{(\varepsilon)}(\frac{\partial V_{n}}{\partial x_{h}}+X_{nh})(\frac{\partial V_{i}}{\partial x_{j}}+X_{ij})dx
  -\frac{1}{2}\int_{I_{\delta}(x_\alpha)}\epsilon_{ih}^{(\varepsilon)}\frac{\partial \Psi}{\partial x_{i}}\frac{\partial \Psi}{\partial x_{h}}\nonumber\\
  &+\int_{I_{\delta}(x_\alpha)}e_{inh}^{(\varepsilon)}\frac{\partial \Psi}{\partial x_{i}}(\frac{\partial V_{n}}{\partial x_{h}}+X_{nh})
\end{align}
Thus, we obtain
\begin{eqnarray}
I[\mathbf{N},\Psi]\leq I[\mathbf{N},\Phi]\leq I[\mathbf{V},\Phi],\forall\mathbf{V}\in (H_{0}^{1}(I_{\delta}))^{3},\Psi\in H_{0}^{1}(I_{\delta})
\end{eqnarray}
which implies $I[\mathbf{N},\Phi]\geq I[\mathbf{N},0]>0$\\
Taking $(\mathbf{G},\zeta)=(\mathbf{N},\Phi)$ in (\ref{hmm pp}), we get
\begin{eqnarray}
c_{ijkl}^{H}X_{ij}X_{kl}=2I[\mathbf{N},\Phi]>0
\end{eqnarray}
$\mathbb{B}=\{X=(X_{ij}):X\mbox{ is symmetric and }X_{ij}X_{ij}=1\}$ and consider $\Psi:\mathbb{B}\mapsto R$ defined by $\Psi(X_{ij})=c_{ijkl}^{H}X_{ij}X_{kl}$, it is easy to see $\Psi$ is continuous on $\mathbb{B}$ and $\Psi >0$, which implies there exists $\alpha_{H}>0$ such that
\begin{eqnarray}
\Psi(\frac{X_{ij}}{\parallel X\parallel})\geq \alpha_{H},\mbox{ for any }\mathbf{ X}=(X_{ij})\neq\mathbf{0}
 \end{eqnarray}
Therefore, there exists $\alpha_{H}>0$ such that $c_{ijkl}^{H}X_{ij}X_{kl}\geq\alpha_{H}X_{ij}X_{kl}$\\
Similarly, it is easy to prove that there exists $\beta_{H}>0$ such that $\epsilon_{ij}^{H}X_{i}X_{j}\geq\beta_{H}X_{i}X_{j}$.\\
This completes the proof of Lemma \ref{hmm lem1}.
\end{proof}

In Section 5, we will give out both error
estimate between the HMM solutions and the homogenization solutions and error estimate of the effective coefficients for piezoelectric composite materials with periodic microstructure.

\section{Homogenization theory}
For those piezoelectric composite materials with periodic microstructure we consider the following equation system,
\begin{align}\label{micro equperiodic}
            \left\{\begin{array} {l@{\quad}l} \displaystyle
                -\frac{\partial{\sigma_{ij}^{(\varepsilon)}}}{\partial{x_{j}}}=f_{i}\qquad
                &\mbox{ ${x}\in\Omega\subset R^{d}$}\\
                \displaystyle    \frac{\partial{D_{i}^{(\varepsilon)}}}{\partial{x_{i}}}={0}\qquad&\mbox{ ${x}\in \Omega\subset R^{d}$}\\
                \displaystyle    \sigma_{ij}^{(\varepsilon)}=c_{ijkn}^{(\varepsilon)}\frac{\partial{u_{k}^{(\varepsilon)}}}{\partial{x_{n}}}+e_{kij}^{(\varepsilon)}\frac{\partial{\varphi^{(\varepsilon)}}}{\partial{x_{k}}}\\
                \displaystyle
                D_{i}^{(\varepsilon)}=e_{ikn}^{(\varepsilon)}\frac{\partial{u_{k}^{(\varepsilon)}}}{\partial{x_{n}}}-\epsilon_{ij}^{(\varepsilon)}\frac{\partial{\varphi^{(\varepsilon)}}}{\partial{x_{j}}}\\
            \displaystyle
            u_{i}^{(\varepsilon)}=0\qquad
            \varphi^{(\varepsilon)}=0\qquad&\mbox{  $x\in\partial\Omega$}\end{array}\right.
\end{align}
where $c_{ijkn}^{(\varepsilon)}(x)=c_{ijkn}(x,y)$, $e_{ijk}^{(\varepsilon)}(x)=e_{ijk}(x,y)$, $\epsilon_{ij}^{(\varepsilon)}(x)=\epsilon_{ij}(x,y)$ are periodic in $y=\frac{x}{\varepsilon}$ with reference cell $Y=[0,1]^d$ as their periodic. For understanding easily, we assume that the coefficients are periodic globally, i.e.
\begin{equation}\label{periodic cond}
c_{ijkn}^{(\varepsilon)}(x)=c_{ijkn}(\frac{x}{\varepsilon})\qquad e_{ijk}^{(\varepsilon)}(x)=e_{ijk}(\frac{x}{\varepsilon})\qquad\epsilon_{ij}^{(\varepsilon)}(x)=\epsilon_{ij}(\frac{x}{\varepsilon})
\end{equation}
Following the step in homogenization theory by \citet{Oleinik},
we get the homogenized equation system of (\ref{micro equperiodic}) as follows,
 \begin{align}\label{macro equperiodic}
            \left\{\begin{array} {l@{\quad}l} \displaystyle -\frac{\partial{\sigma_{ij}^{(0)}}}{\partial{x_{j}}}=f_{i}\qquad&\mbox{ $x\in\Omega$}\\
                \displaystyle \frac{\partial{D_{i}^{(0)}}}{\partial{x_{i}}}={0}\qquad&\mbox{ $x\in \Omega$}\\
                \displaystyle \sigma_{ij}^{(0)}=c_{ijkn}^{(0)}\frac{\partial{u_{k}^{(0)}}}{\partial{x_{n}}}+e_{kij}^{(0)}\frac{\partial{\varphi^{(0)}}}{\partial{x_{k}}}\\
                \displaystyle D_{i}^{(0)}=e_{ikn}^{(0)}\frac{\partial{u_{k}^{(0)}}}{\partial{x_{n}}}-\epsilon_{ij}^{(0)}\frac{\partial{\varphi^{(0)}}}{\partial{x_{j}}}\\
                \displaystyle  u_{i}^{(0)}=0��\qquad
            \varphi^{(0)}=0\qquad&\mbox{ $x\in \partial\Omega$} \end{array}\right.
          \end{align}
where $c_{ijkl}^{(0)}$, $e_{lij}^{(0)}$ and $\epsilon_{ij}^{(0)}$ are the homogenized effective coefficients given by,
 \begin{eqnarray}\label{asy coefficient}
            c_{ijkl}^{(0)} & = & \langle c_{ijkl}(y)+\tau_{ij}^{kl}(y)\rangle_{Y} \\
            e_{lij}^{(0)} & = &\langle e_{lij}(y)+\xi_{ij}^{l}(y)\rangle_{Y}\nonumber\\
                      & = &\langle e_{lij}(y)+d_{l}^{ij}(y)\rangle_{Y}\\
            \epsilon_{ij}^{(0)} & = & \langle\epsilon_{ij}(y)-g_{ij}(y)\rangle_{Y}
  \end{eqnarray}
where $\tau_{ij}^{kl}(y)$, $\xi_{ij}^{l}(y)$, $d_{l}^{ij}(y)$, $g_{ij}(y)$ can be got by solving the following two cell probelms,
  \begin{align}\label{asy cella}
          \left\{\begin{array} {l@{\quad}l} \displaystyle
              \frac{\partial{\tau_{ij}^{kl}}}{\partial{y_{j}}}=-\frac{\partial{c_{ijkl}(y)}}{\partial{y_{j}}}&\mbox{ $y\in Y$}\\
              \displaystyle    \frac{\partial{d_{i}^{kl}}}{\partial{y_{i}}}=-\frac{\partial{e_{ikl}(y)}}{\partial{y_{i}}}&\mbox{ $y\in Y$}\\
              \displaystyle    \tau_{ij}^{kl}=c_{ijnh}(y)\frac{\partial{N_{n}^{kl}(y)}}{\partial{y_{h}}}+e_{hij}(y)\frac{\partial{\phi_{kl}(y)}}{\partial{y_{h}}}\\
              \displaystyle d_{i}^{kl}=e_{inh}(y)\frac{\partial{N_{n}^{kl}(y)}}{\partial{y_{h}}}-\epsilon_{ih}(y)\frac{\partial{\phi_{kl}(y)}}{\partial{y_{h}}}\\
            \displaystyle N_{n}^{kl}(y)\mbox{ is periodic on }R^{d}\mbox{ and }\int_{Y} N_{n}^{kl}(y)dy = 0\\
             \displaystyle \phi_{kl}(y)\mbox{ is periodic on }R^{d}\mbox{ and } \int_{Y} \phi_{kl}(y)dy = 0
            \end{array}\right.
         \end{align}
 \begin{align}\label{asy cellb}
          \left\{\begin{array} {l@{\quad}l} \displaystyle \frac{\partial{\xi_{ij}^{l}}}{\partial{y_{j}}}=-\frac{\partial{e_{lij}(y)}}{\partial{y_{j}}}&\mbox{ $y\in Y$}\\
              \displaystyle    \frac{\partial{g_{i}^{l}}}{\partial{y_{i}}}=\frac{\partial{\epsilon_{il}(y)}}{\partial{y_{i}}}&\mbox{ $y\in Y$}\\
              \displaystyle    \xi_{ij}^{l}=c_{ijkh}(y)\frac{\partial{W_{k}^{l}(y)}}{\partial{y_{h}}}+e_{hij}(y)\frac{\partial{\psi_{l}(y)}}{\partial{y_{h}}}\\
              \displaystyle
              g_{il}=e_{ikh}(y)\frac{\partial{W_{k}^{l}(y)}}{\partial{y_{h}}}-\epsilon_{ih}(y)\frac{\partial{\psi_{l}(y)}}{\partial{y_{h}}}\\
             \displaystyle W_{k}^{l}(y)\mbox{ is periodic on }R^{d}\mbox{ and } \int_{Y} W_{k}^{l}(y)dy = 0\\
              \displaystyle \psi_{l}(y)\mbox{ is periodic on }R^{d}\mbox{ and } \int_{Y} \psi_{l}(y)dy = 0
            \end{array}\right.
          \end{align}

\begin{remark}
we may deduce that $\langle e_{lij}(y)+\xi_{ij}^{l}(y)\rangle_{Y}=\langle e_{lij}(y)+d_{l}^{ij}(y)\rangle_{Y}$ and $ c^{(0)}_{ijkl}=c^{(0)}_{jikl}=c^{(0)}_{klij},e^{(0)}_{kij}=e^{(0)}_{kji},\epsilon^{(0)}_{ij}=\epsilon^{(0)}_{ij}$ by the techniques we use in the proof of Lemma \ref{hmm lem1}
\end{remark}

So far we have derived the homogenization equations of piezoelectric composite materials with periodic microstructure in formal by asymptotic expansion. The following Lemmas and Theorems asserts the convergence of asymptotic expansion. The difficulty for the theoretical analysis of this part lays on the point that the regularity of the solutions of cell problems (\ref{asy cella}) and (\ref{asy cellb}) cannot reach $W^{1,\infty}(R^d)$, which implies traditional treatment in  second order elliptic equations in \citet{Oleinik} does not work.
\\

The following two lemmas play an important role in getting the convergence conclusion of asymptotic expansion. The original idea of them is from \citet{Suslina}. Our result is the promotion of Corollary 8.3 in \citet{Suslina} in piezoelectric equation system.
\begin{lemma}\label{hom lemone}
If $N_n^{kl}\in H^1_{per}(Y)\cap L^{\infty}(R^d)$ and $\phi_{kl}\in H^1_{per}(Y)\cap L^{\infty}(R^d)$,
  the operator $[\nabla  \mathbf{N}^{kl}]$ of multiplication by the matrix $[\nabla
   \mathbf{N}^{kl}]$ and the operator $[\nabla\phi_{kl}]$ of multiplication by the column $[\nabla \bf
   N^{kl}]$ are continuous from
   $H^{1}(R^d)$ to $\Re=L_{2}(R^d;C^d)$ and
   \begin{eqnarray*}
   \parallel [\nabla \mathbf{ N}^{kl}] \parallel_{H^{1}(R^d)\rightarrow \Re
   }+\parallel [\nabla\phi_{kl}] \parallel_{H^{1}(R^d)\rightarrow \Re
   }\leq C_{1}
   \end{eqnarray*}
   where $C_{1}$ depends on the coefficient of the material only.
\end{lemma}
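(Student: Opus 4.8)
The plan is to prove this by the standard localise‑and‑sum scheme that underlies the Suslina‑type estimates (this lemma is the counterpart, for the piezoelectric cell system, of Corollary~8.3 of \citet{Suslina}). After reducing by density to $v\in C_c^\infty(\mathbb R^d)$, write $Y_n=Y+n$ and $Y_n^\sharp=2Y+n$ for $n\in\mathbb Z^d$; since $\nabla\mathbf N^{kl}$, $\nabla\phi_{kl}$ and all the coefficients are $Y$-periodic, it is enough to prove, uniformly in $n$, the cell‑wise bound
\[
\int_{Y_n}\big(|\nabla\mathbf N^{kl}|^2+|\nabla\phi_{kl}|^2\big)|v|^2\,dx\ \le\ C\int_{Y_n^\sharp}\big(|v|^2+|\nabla v|^2\big)\,dx ,
\]
with $C$ depending only on the ellipticity constants $\alpha,\beta$, the $L^\infty$ norms of the coefficients and $\|\mathbf N^{kl}\|_{L^\infty},\|\phi_{kl}\|_{L^\infty}$; summing over $n$ and using the bounded overlap of $\{Y_n^\sharp\}_{n}$ then yields the stated global inequality with $C_1=C_1(\text{coefficients})$.

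For the cell‑wise bound I would rewrite the cell problem (\ref{asy cella}) in divergence form, so that the fluxes $c_{ijnh}\partial_hN_n^{kl}+e_{hij}\partial_h\phi_{kl}+c_{ijkl}$ and $e_{inh}\partial_hN_n^{kl}-\epsilon_{ih}\partial_h\phi_{kl}+e_{ikl}$ are divergence‑free on $\mathbb R^d$, pick $\eta\in C_c^\infty(Y_n^\sharp)$ with $\eta\equiv1$ on $Y_n$ and $|\nabla\eta|\le C$, set $w=\eta^2v^2$, test the mechanical equation against $G_i=wN_i^{kl}$ and the electric equation against $\zeta=w\phi_{kl}$ (admissible in $H^1_c$ because $v$ is smooth and $\mathbf N^{kl},\phi_{kl}\in H^1\cap L^\infty$), and subtract. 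Exactly as in the saddle‑point computation in the proof of Lemma~\ref{hmm lem1}(III), the two piezoelectric cross terms carrying the factor $w$ cancel, leaving on the left the coercive quantity $\int(c_{ijnh}\partial_hN_n^{kl}\partial_jN_i^{kl}+\epsilon_{ih}\partial_i\phi_{kl}\partial_h\phi_{kl})w\ \ge\ \alpha\|\eta v\,e(\mathbf N^{kl})\|_{L^2}^2+\beta\|\eta v\,\nabla\phi_{kl}\|_{L^2}^2$, with $e(\mathbf N^{kl})_{ij}=\tfrac12(\partial_jN_i^{kl}+\partial_iN_j^{kl})$, and on the right a sum of terms each a product of a bounded coefficient, at most one undifferentiated factor $\mathbf N^{kl}$ or $\phi_{kl}$ (bounded via the $L^\infty$ hypothesis), at most one factor $\nabla\mathbf N^{kl}$ or $\nabla\phi_{kl}$, and factors drawn from $v,\nabla v,\eta,\nabla\eta$. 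By Cauchy--Schwarz and Young's inequality one absorbs a small multiple of $\|\eta v\,\nabla\mathbf N^{kl}\|_{L^2}^2+\|\eta v\,\nabla\phi_{kl}\|_{L^2}^2$ and bounds the remainder by $C\|v\|_{H^1(Y_n^\sharp)}^2$; to convert the surviving $\|\eta v\,e(\mathbf N^{kl})\|_{L^2}^2$ into control of $\|\eta v\,\nabla\mathbf N^{kl}\|_{L^2}^2$ I would apply Korn's second inequality on the fixed cube $Y_n^\sharp$ to the $H^1$ field $\eta v\,\mathbf N^{kl}$, which gives $\|\eta v\,\nabla\mathbf N^{kl}\|_{L^2}^2\le C\|\eta v\,e(\mathbf N^{kl})\|_{L^2}^2+C\|\mathbf N^{kl}\|_{L^\infty}^2\|v\|_{H^1(Y_n^\sharp)}^2$. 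Combining and re‑absorbing closes the cell‑wise bound (since $\eta\equiv1$ on $Y_n$, the left‑hand integral over $Y_n$ is $\le\|\eta v\,\nabla\mathbf N^{kl}\|_{L^2}^2+\|\eta v\,\nabla\phi_{kl}\|_{L^2}^2$); the estimate for $[\nabla\phi_{kl}]$ is identical and needs no Korn step, the dielectric form being coercive in the full gradient.

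The essential obstacle — and, as the authors stress, the reason the classical bookkeeping of \citet{Oleinik} for second‑order scalar elliptic equations cannot simply be copied — is that $\nabla\mathbf N^{kl},\nabla\phi_{kl}$ lie only in $L^2_{per}(Y)$, not in $L^\infty_{per}(Y)$, so they cannot be pulled out of the integral; one is forced to differentiate the product $wN$ and use the cell equation to trade one gradient of the corrector for a gradient of $v$, and it is precisely the $L^\infty$ bounds on $\mathbf N^{kl},\phi_{kl}$ (the content of the hypothesis) that keep the resulting lower‑order terms finite. The one point requiring care beyond the scalar case is that the elasticity tensor is coercive only on symmetric matrices, so the energy identity controls $e(\mathbf N^{kl})$ rather than $\nabla\mathbf N^{kl}$ in the weighted $L^2$ norm; the device above — Korn's second inequality applied to $\eta v\,\mathbf N^{kl}$ — is the clean way to recover the full gradient (alternatively one may split $v$ into its cell average plus a mean‑zero remainder and invoke the periodic Korn inequality on the cell).
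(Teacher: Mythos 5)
Your proposal is correct, but it is worth noting that the paper itself does not prove this lemma at all: it only states it and defers to Corollary 8.3 of \citet{Suslina}, calling the result its ``promotion'' to the piezoelectric system. What you supply is therefore the missing argument, and it is the right one: the standard Caccioppoli-type localization (cutoff $\eta$, test functions $\eta^{2}v^{2}N_i^{kl}$ and $\eta^{2}v^{2}\phi_{kl}$, subtraction of the two cell equations so that the piezoelectric cross terms carrying the weight $w=\eta^2v^2$ cancel exactly as in the saddle-point computation of Lemma \ref{hmm lem1}(III)), followed by summation over a bounded-overlap cover of $R^d$ by translated cells; the hypotheses $N_n^{kl},\phi_{kl}\in L^{\infty}$ enter precisely where you say they do, and you correctly identify the one genuinely new point relative to the scalar elliptic case, namely that the elastic form only controls the symmetric gradient, which you repair with Korn's inequality applied to $\eta v\,\mathbf N^{kl}$. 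Two small refinements: since every gradient factor of $\mathbf N^{kl}$ on the right-hand side is contracted against a tensor symmetric in the corresponding indices ($c_{ijnh}$ in $(n,h)$ and $(i,j)$, $e_{inh}$ in $(n,h)$), the Young absorption should be phrased in terms of $\|\eta v\,e(\mathbf N^{kl})\|_{L^2}$ (or performed after the Korn conversion, choosing the small parameter against the Korn constant), as absorbing $\|\eta v\,\nabla\mathbf N^{kl}\|_{L^2}$ directly into the left side is not possible; and since $\eta v\,\mathbf N^{kl}\in H^1_0(Y_n^{\sharp})$, Korn's first inequality already suffices, with an absolute constant. Also, your constant depends on $\|\mathbf N^{kl}\|_{L^\infty}$ and $\|\phi_{kl}\|_{L^\infty}$ in addition to the ellipticity and $L^\infty$ bounds of the coefficients; this is consistent with the lemma as stated only insofar as the cell solutions (hence their $L^\infty$ norms, assumed finite) are determined by the material coefficients, which is the same convention used in \citet{Suslina}.
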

\begin{lemma}\label{hom lemtwo}
If $W_k^{l}\in H^1_{per}(Y)\cap L^{\infty}(R^d)$ and $\psi_{l}\in H^1_{per}(Y)\cap L^{\infty}(R^d)$,
 the operator $[\nabla \mathbf{W}^{l}]$ of multiplication by the matrix $[\nabla
   \mathbf{W}^{l}]$ and the operator $[\nabla\Psi_{l}]$ of multiplication by the column $[\nabla\psi_{l}]$ are continuous from
   $H^{1}(R^d)$ to $\Re=L_{2}(R^d;C^d)$ and
   \begin{eqnarray*}
   \parallel [\nabla \mathbf{W}^{l}] \parallel_{H^{1}(R^3)\rightarrow \Re
   }+\parallel [\nabla\psi_{l}] \parallel_{H^{1}(R^3)\rightarrow \Re
   }\leq C_{2}
   \end{eqnarray*}
   where $C_{2}$ depends on the coefficient of the material only.
\end{lemma}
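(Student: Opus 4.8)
The plan is to follow, mutatis mutandis, the scheme announced for Lemma \ref{hom lemone}: to test the cell system (\ref{asy cellb}) against its own solutions weighted by $|u|^2$, exploiting that the mechanical and electric unknowns are coupled in the two equations in an adjoint way --- the same feature that produces coercivity in Lemma \ref{hmm lem1}. By density it suffices to prove the bound for $u\in C_0^\infty(R^d)$. Regard $W_k^l$ and $\psi_l$ as $Y$-periodic functions on $R^d$ lying in $H^1_{loc}(R^d)\cap L^\infty(R^d)$, so that (\ref{asy cellb}) holds in $\mathcal D'(R^d)$: for all compactly supported $\mathbf G\in (H^1(R^d))^d$ and $\zeta\in H^1(R^d)$,
\[
\int_{R^d}\bigl(c_{ijkh}\partial_hW_k^l+e_{hij}\partial_h\psi_l+e_{lij}\bigr)\partial_jG_i\,dx=0,\qquad
\int_{R^d}\bigl(e_{ikh}\partial_hW_k^l-\epsilon_{ih}\partial_h\psi_l-\epsilon_{il}\bigr)\partial_i\zeta\,dx=0 .
\]
Since $W^l,\psi_l\in L^\infty$, the pair $G_i=W_i^l\,|u|^2$, $\zeta=\psi_l\,|u|^2$ is an admissible test function.

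Subtracting the second identity from the first, expanding $\partial_j(W_i^l|u|^2)=|u|^2\partial_jW_i^l+W_i^l\partial_j(|u|^2)$ (and similarly for $\psi_l$), and using $\partial_j(|u|^2)=2\,\mathrm{Re}(\bar u\,\partial_j u)$, so that $|\nabla(|u|^2)|\le 2|u|\,|\nabla u|$, I would group the outcome into four blocks: (i) the diagonal part $\int|u|^2\bigl(c_{ijkh}\partial_hW_k^l\partial_jW_i^l+\epsilon_{ih}\partial_h\psi_l\partial_i\psi_l\bigr)$, which by the symmetry and ellipticity of $c$ and $\epsilon$ is $\ge\alpha\int|u|^2|s(W^l)|^2+\beta\int|u|^2|\nabla\psi_l|^2$ (the matrix $\nabla W^l$ enters only through its symmetric part); (ii) the mixed mechanical--electric part, which vanishes after relabelling the dummy indices, exactly as the cross terms cancel in the energy functional in the proof of Lemma \ref{hmm lem1}(III); (iii) a linear part $\int|u|^2\bigl(e_{lij}\partial_jW_i^l+\epsilon_{il}\partial_i\psi_l\bigr)$, bounded by $C\int|u|^2(|\nabla W^l|+|\nabla\psi_l|)$; and (iv) commutator terms in which $\nabla(|u|^2)$ is multiplied by $W^l$ or $\psi_l$ and by one of the fluxes.

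For block (iii) I would use Young's inequality; for block (iv) the $L^\infty$ bounds on $c,e,\epsilon$ and on $W^l,\psi_l$ together with Cauchy--Schwarz and Young, so that every contribution is dominated by an arbitrarily small multiple of $\int|u|^2|\nabla W^l|^2+\int|u|^2|\nabla\psi_l|^2$ plus $C\|u\|_{1,R^d}^2$. To upgrade the symmetrised gradient $s(W^l)$ in block (i) to the full gradient I would use the pointwise identity $|\nabla W^l|^2=2|s(W^l)|^2-\partial_iW_j^l\partial_jW_i^l$: multiplying by $|u|^2$ and integrating the last term by parts yields $\int|u|^2(\nabla\cdot W^l)^2$ (harmless) together with further commutators that are again absorbable because $W^l\in L^\infty$, giving $\int|u|^2|\nabla W^l|^2\le 4\int|u|^2|s(W^l)|^2+C\|u\|_{1,R^d}^2$. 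Feeding this back into the identity from the testing step and absorbing the small terms into the left-hand side, I obtain $\int|u|^2|\nabla W^l|^2+\int|u|^2|\nabla\psi_l|^2\le C_2^2\|u\|_{1,R^d}^2$, where $C_2$ depends only on $\alpha,\beta$, on $\|c\|_\infty,\|e\|_\infty,\|\epsilon\|_\infty$, and on $\|W^l\|_\infty,\|\psi_l\|_\infty$ --- hence on the material coefficients alone. A density argument then extends the inequality to every $u\in H^1(R^d)$, which is the assertion.

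The delicate point is that $\nabla W^l$ and $\nabla\psi_l$ are only locally square integrable, so multiplication by these matrices and columns is \emph{not} a bounded operator on $L_2(R^d)$ and the gain of one derivative on $u$ is genuinely needed; making this rigorous forces the test pair $(W_i^l|u|^2,\psi_l|u|^2)$, which is admissible only thanks to the $L^\infty$-regularity of the cell solutions --- precisely the hypothesis stressed in the text preceding the lemma (and the reason the classical $W^{1,\infty}$-based treatment of \citet{Oleinik} does not apply). The second obstacle, specific to the coupled piezoelectric system and absent in the scalar setting of \citet{Suslina}, is that the mixed mechanical--electric terms carry no definite sign; they are eliminated only by the adjoint way the piezoelectric tensor enters the equations of motion and Maxwell's equations, i.e.\ by the same structural cancellation that underlies the coercivity in Lemma \ref{hmm lem1}.
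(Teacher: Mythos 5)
Your proposal is essentially correct, but note that the paper itself supplies no proof of this lemma: it only states Lemmas \ref{hom lemone}--\ref{hom lemtwo} and refers to Corollary 8.3 of \citet{Suslina}, asserting that the result ``promotes'' that corollary to the piezoelectric system. Your argument is therefore a self-contained reconstruction of exactly the technique the citation points to (the Caccioppoli-type trick of testing the cell system against its own bounded solution multiplied by $|u|^{2}$), and the two new ingredients you supply beyond the scalar/Maxwell setting are the right ones: (a) the cross terms $e_{hij}\partial_h\psi_l\,\partial_jW_i^l$ and $e_{ikh}\partial_hW_k^l\,\partial_i\psi_l$ cancel exactly when the electric identity is subtracted from the mechanical one (I checked the signs against (\ref{asy cellb}); subtraction is indeed the correct combination), which replaces the sign-definiteness available in the uncoupled case; and (b) since $c$ is only coercive on symmetric matrices, you need the weighted Korn-type identity $|\nabla W^l|^2=2|s(W^l)|^2-\partial_iW_j^l\partial_jW_i^l$ plus integration by parts, with the commutators absorbed thanks to $W^l\in L^{\infty}$. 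Two points you should tighten if you write this out in full: the integration by parts in the Korn step (and the use of (\ref{asy cellb}) in $\mathcal D'(R^d)$ with test pair $(W_i^l|u|^2,\psi_l|u|^2)$) should be justified by mollifying the cell solutions, which preserves the $L^{\infty}$ bound and passes to the limit in $H^1_{loc}$; and the absorption arguments require the a priori finiteness of $\int|u|^2|\nabla W^l|^2$, which holds because you first take $u\in C_0^{\infty}$ and $\nabla W^l\in L^2_{loc}$. Finally, be aware that the constant you obtain depends on $\|W^l\|_{\infty}$ and $\|\psi_l\|_{\infty}$ (hypotheses of the lemma) as well as on $\alpha,\beta$ and the $L^{\infty}$ norms of $c,e,\epsilon$; this matches the spirit, though not quite the letter, of the paper's claim that $C_2$ depends ``on the coefficient of the material only.''
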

The following Lemma is the main result that leads the convergence conclusion of asymptotic expansion. For expression convenience, we introduce
\begin{align}\label{asy orderone}
    \displaystyle u_{k(1)}^{(\varepsilon)}&=u_{k}^{(0)}(x)+\varepsilon( N_n^{kl}(y)\frac{\partial u_k^{(0)}(x)}{\partial x_l}+W_n^{l}(y)\frac{\partial \varphi^{(0)}(x)}{\partial x_l})\\
     \displaystyle \varphi_{(1)}^{(\varepsilon)}&=\varphi^{(0)}(x)+\varepsilon (\phi_{kl}(y)\frac{\partial u_k^{(0)}(x)}{\partial x_l}+\psi_{l}(y)\frac{\partial \varphi^{(0)}(x)}{\partial x_l})
\end{align}
\begin{lemma}\label{hom lemthree}
If $u_{i}^{(0)}\in C^{2}(\Omega)$, $\varphi^{(0)}\in C^{2}(\Omega)$,
        $N_{n}^{kl}\in H^1_{per}(Y)\cap L^{\infty}(R^d)$, $W_{n}^{l}\in H^1_{per}(Y)\cap
        L^{\infty}(R^d)$, $\varphi_{kl}\in H^1_{per}(Y)\cap L^{\infty}(R^d)$, $\psi_{l}\in
        H^1_{per}(Y)\cap L^{\infty}(R^d)$,
 we have
  \begin{eqnarray*}
        \parallel
        u_{i(1)}^{(\varepsilon)}-u_{i}^{(\varepsilon)}\parallel_{H^{1}(\Omega)}+\parallel
        \varphi_{(1)}^{(\varepsilon)}-\varphi^{(\varepsilon)}\parallel_{H^{1}(\Omega)}\leq
        C\varepsilon^{\frac{1}{2}}
      \end{eqnarray*}
\end{lemma}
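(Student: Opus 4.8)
The plan is to prove Lemma \ref{hom lemthree} by the classical energy method for periodic homogenization, adapted to the piezoelectric (saddle-point) coupling and to the reduced regularity $H^1_{per}(Y)\cap L^\infty$ of the cell solutions. First I would introduce the residual $\mathbf{w}^{(\varepsilon)} = u^{(\varepsilon)}_{(1)} - u^{(\varepsilon)}$ and $\theta^{(\varepsilon)} = \varphi^{(\varepsilon)}_{(1)} - \varphi^{(\varepsilon)}$. Since these do not vanish on $\partial\Omega$ (the corrector terms $\varepsilon N_n^{kl}\partial_l u_k^{(0)}$ etc.\ are $O(\varepsilon)$ but nonzero on the boundary), I would subtract a suitable boundary-layer cutoff: choose $\zeta_\varepsilon\in C_0^\infty(\Omega)$ with $\zeta_\varepsilon=1$ outside an $O(\varepsilon)$-neighbourhood of $\partial\Omega$, $|\nabla\zeta_\varepsilon|\le C\varepsilon^{-1}$, and work with $\hat u^{(\varepsilon)}_{(1)} = u^{(0)} + \varepsilon\zeta_\varepsilon(N_n^{kl}\partial_l u_k^{(0)} + W_n^l\partial_l\varphi^{(0)})$ and similarly $\hat\varphi^{(\varepsilon)}_{(1)}$. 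The cutoff contributes a boundary strip of measure $O(\varepsilon)$ on which the integrand is $O(\varepsilon^{-2})$ (from $|\nabla\zeta_\varepsilon|^2$) times bounded quantities — using the $L^\infty$ bounds on $N,W,\phi,\psi$ and the $C^2$ bounds on $u^{(0)},\varphi^{(0)}$ — giving the crucial $O(\varepsilon^{1/2})$ contribution after taking square roots. This is exactly why the estimate is $\varepsilon^{1/2}$ and not $\varepsilon$.

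Next I would derive the equation satisfied by $(\hat u^{(\varepsilon)}_{(1)}-u^{(\varepsilon)},\,\hat\varphi^{(\varepsilon)}_{(1)}-\varphi^{(\varepsilon)})$. Plugging the two-scale ansatz (\ref{asy orderone}) into the operators $-\partial_j\sigma_{ij}^{(\varepsilon)}$ and $\partial_i D_i^{(\varepsilon)}$ and using the cell problems (\ref{asy cella})--(\ref{asy cellb}) to cancel the $O(\varepsilon^{-1})$ terms, the residual solves a piezoelectric system of the same form with right-hand sides $F_i^{(\varepsilon)}$, $G^{(\varepsilon)}$ that are divergences of terms of the form $b_{ij}(x/\varepsilon)\,\partial^2 u^{(0)}$ plus lower-order $O(\varepsilon)$ terms, where the oscillating coefficients $b_{ij}$ involve products like $c_{ijnh}(y)\partial_h N_n^{kl}(y)$, $e_{hij}(y)\partial_h\phi_{kl}(y)$, etc. The delicate point is that these products are only in $L^2_{per}(Y)$, not $L^\infty$ — so one cannot simply pull them out in $L^\infty$. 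Here I invoke Lemma \ref{hom lemone} and Lemma \ref{hom lemtwo}: the multiplication operators $[\nabla \mathbf N^{kl}]$, $[\nabla\phi_{kl}]$, $[\nabla\mathbf W^l]$, $[\nabla\psi_l]$ are bounded from $H^1(\mathbb R^d)$ into $L^2$, uniformly; composing with the smooth, compactly-bounded macroscopic factors $\partial_l u_k^{(0)},\partial_l\varphi^{(0)}\in C^1(\overline\Omega)\hookrightarrow H^1$ lets me bound the oscillating right-hand side in $H^{-1}(\Omega)$ (or its natural dual pairing) by $C\varepsilon$ — the $\varepsilon$ coming from the prefactor in (\ref{asy orderone}) and from $\partial_j(b_{ij}(x/\varepsilon))=\varepsilon^{-1}(\partial_j b_{ij})(x/\varepsilon)$ being absorbed by the surviving structure of the cell equations.

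Finally I would run the energy estimate for the residual system itself. Test the first (elasticity) block with $\mathbf w^{(\varepsilon)}$ and the second (electrostatics) block with $\theta^{(\varepsilon)}$, then subtract the two identities so that the coupling terms $e_{kij}^{(\varepsilon)}$ cancel — this is precisely the saddle-point trick already used to prove (III) of Lemma \ref{hmm lem1}, and it is the structural reason the argument goes through despite the energy functional not being convex. What remains is
$\alpha\|\nabla\mathbf w^{(\varepsilon)}\|_{L^2}^2 + \beta\|\nabla\theta^{(\varepsilon)}\|_{L^2}^2 \le |\langle F^{(\varepsilon)},\mathbf w^{(\varepsilon)}\rangle| + |\langle G^{(\varepsilon)},\theta^{(\varepsilon)}\rangle| + (\text{boundary-strip terms})$,
and bounding the right side by $C\varepsilon\,(\|\nabla\mathbf w^{(\varepsilon)}\|_{L^2}+\|\nabla\theta^{(\varepsilon)}\|_{L^2}) + C\varepsilon\,(\|\nabla\mathbf w^{(\varepsilon)}\|_{L^2}+\|\nabla\theta^{(\varepsilon)}\|_{L^2})$ via the ellipticity constants $\alpha,\beta$, the bounds from Lemmas \ref{hom lemone}--\ref{hom lemtwo}, Poincaré on $H^1_0(\Omega)$, and the $O(\varepsilon^{1/2})$ from the cutoff strip, a Young's-inequality absorption yields $\|\nabla\mathbf w^{(\varepsilon)}\|_{L^2}+\|\nabla\theta^{(\varepsilon)}\|_{L^2}\le C\varepsilon^{1/2}$. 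Adding back the $O(\varepsilon)$ difference between $\hat u^{(\varepsilon)}_{(1)}$ and $u^{(\varepsilon)}_{(1)}$ (controlled again by the boundary strip and the $L^\infty$ bounds) and using Poincaré to pass from the $H^1$-seminorm to the full $H^1$-norm gives the claimed estimate. The main obstacle, and the place where the novelty over \citet{Oleinik} lies, is the step handling the oscillating right-hand side: without $W^{1,\infty}$ regularity of the correctors the usual pointwise manipulation fails, and one must route the estimate through the operator-norm bounds of Lemmas \ref{hom lemone} and \ref{hom lemtwo}, being careful that every factor that multiplies an $L^2_{per}$-only corrector gradient is placed in $H^1$ so that those lemmas apply.
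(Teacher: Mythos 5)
Your proposal follows essentially the same route as the paper: introduce a boundary cutoff (the paper's truncation function $\tau^{\varepsilon}$), split the error into the cutoff-modified expansion versus the expansion (Step 1) and versus the true solution via an energy estimate for the residual system (Step 2), and compensate for the correctors being only $H^1_{per}\cap L^{\infty}$ rather than $W^{1,\infty}$ by routing the corrector-gradient terms through the multiplier bounds of Lemmas \ref{hom lemone} and \ref{hom lemtwo}. The only presentational difference is that the paper applies those lemmas explicitly in the cutoff-strip estimate (its term (III), taking $w=(1-\tau^{\varepsilon})\partial_l u_k^{(0)}\in H^1$) and delegates the residual energy estimate to the classical treatment in \citet{Oleinik}, whereas you spell out that energy step (including the subtraction that cancels the piezoelectric coupling) and invoke the lemmas mainly for the oscillating right-hand side; your closing remark about placing every factor multiplying an $L^2_{per}$-only corrector gradient in $H^1$ covers the strip term as well, so the argument is sound.
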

\begin{proof}
Take a truncation function $\tau^{\varepsilon}(x)$ satisfying,
\begin{eqnarray*}
   && 1.\quad\tau^{\varepsilon}(x)\in C_{0}^{\infty}(\Omega),\quad0\leq \tau^{\varepsilon}(x)\leq 1\quad and \quad \tau^{\varepsilon}(x)=1,\quad if\quad \rho(x,\partial\Omega)>\varepsilon\\
   && 2.\quad\varepsilon|\bigtriangledown\tau^{\varepsilon}(x)|\leq C ,\quad in\quad \Omega
    \end{eqnarray*}
Set
     \begin{eqnarray*}
     W_{k(1)}^{(\varepsilon)}&=&u_{k(1)}^{(\varepsilon)}-\varepsilon(1-\tau^{\varepsilon}(x))(N_n^{kl}\frac{\partial{u_{n}^{(0)}}}{\partial x_{l}}+W_{k}^{l}\frac{\partial\varphi^{(0)}}{\partial x_{l}})\\
    W_{(1)}^{(\varepsilon)}&=&\varphi_{(1)}^{(\varepsilon)}-\varepsilon(1-\tau^{\varepsilon}(x))(\phi_{kl}\frac{\partial{u_{n}^{(0)}}}{\partial x_{l}}+\psi_{l}\frac{\partial\varphi^{(0)}}{\partial x_{l}})
    \end{eqnarray*}
To get the conclusion, we separate the proof into two steps,
 \begin{eqnarray*}
    &&Step 1.\quad Prove \quad \parallel u_{k(1)}^{(\varepsilon)}-W_{k(1)}^{(\varepsilon)}\parallel_{H^{1}(\Omega)} \leq C{\varepsilon}^{\frac{1}{2}},\quad\parallel\varphi_{(1)}^{(\varepsilon)}-W_{(1)}^{(\varepsilon)}\parallel_{H^{1}(\Omega)}\leq C{\varepsilon}^{\frac{1}{2}}\\
    &&Step 2.\quad Prove \quad \parallel u_{k}^{(\varepsilon)}-W_{k(1)}^{(\varepsilon)}\parallel_{H^{1}(\Omega)} \leq C\varepsilon^{\frac{1}{2}},\quad\parallel\varphi^{(\varepsilon)}-W_{(1)}^{(\varepsilon)}\parallel_{H^{1}(\Omega)}\leq C\varepsilon^{\frac{1}{2}}
    \end{eqnarray*}
For step 1,
 \begin{align*}
    &\parallel u_{k(1)}^{(\varepsilon)}-W_{k(1)}^{(\varepsilon)}\parallel^{2}_{H^{1}(\Omega)}\\
    =&\int_{\Omega}[\varepsilon(1-\tau^{\varepsilon}(x))(N_n^{kl}\frac{\partial{u_{n}^{(0)}}}{\partial x_{l}}+W_{k}^{l}\frac{\partial\varphi^{(0)}}{\partial x_{l}})]^{2}dx\\
    &+\sum_{j}\int_{\Omega}[\varepsilon\frac{\partial\tau^{\varepsilon}(x)}{\partial x_{j}}(N_n^{kl}\frac{\partial{u_{n}^{(0)}}}{\partial x_{l}}+W_{k}^{l}\frac{\partial\varphi^{(0)}}{\partial x_{l}})]^{2}dx\\
    &+\sum_{j}\int_{\Omega}[(1-\tau^{\varepsilon}(x))(\frac{\partial N_{k}^{ml}}{\partial y_j}\frac{\partial{u_{m}^{(0)}}}{\partial x_{l}}+\frac{\partial W_{k}^{l}}{\partial y_j}\frac{\partial\varphi^{(0)}}{\partial x_{l}})]^{2}dx\\
    &+\sum_{j}\int_{\Omega}[\varepsilon(1-\tau^{\varepsilon}(x))(N_{k}^{ml}\frac{\partial^{2}u_{m}^{(0)}}{\partial x_j \partial x_l}+W_{k}^{l}\frac{\partial^{2}\varphi^{(0)}}{\partial x_j \partial x_l})]^{2}dx\\
    =&(I)+(II)+(III)+(IV)
    \end{align*}
    For (I), (II), (IV), there is no essential difference from the analysis of corresponding parts in second order elliptic equations in \citet{Oleinik}. So it is easy to get
    \begin{eqnarray*}
    (I)+(II)+(IV)\leq C_{1}(\varepsilon^{2}+\varepsilon \rho^{d-1})(\parallel u_{i}^{(0)}\parallel^2_{H^2(\Omega)}+\parallel \varphi^{(0)}\parallel^2_{H^2(\Omega)})
    \end{eqnarray*}
    For (III), we notice that the regularity of the solutions of cell
    problems (\ref{asy cella})and (\ref{asy cellb}) cannot reach $W^{1,\infty}(R^d)$, which makes the traditional treatment in  second order elliptic equations in \citet{Oleinik} not work. In this case, we lent help from Lemma \ref{hom lemone} and Lemma \ref{hom lemtwo}. We take the first term of (III) as example to give out proof in details. The proof of other terms are similar.\\
    By lemma \ref{hom lemone},we have
    \begin{align}\label{imp}
    \int_{R^d}[\frac{\partial N_{n}^{kl}}{\partial y_j}\frac{\partial N_{n}^{kl}}{\partial y_j}]\mid v(y)\mid^{2}dy\leq C(\int_{R^d}\mid v\mid^2 dy+\int_{R^d}\mid \nabla v\mid^2 dy)\qquad \forall v\in H^{1}(\Omega)
     \end{align}
     which implies
 \begin{eqnarray*}
     && \int_{R^d}[\frac{\partial N_{n}^{kl}}{\partial y_j}\frac{\partial N_{n}^{kl}}{\partial y_j}]\mid w(x)\mid^{2}dx\\
    &=&\varepsilon^{d}\int_{R^d}[\frac{\partial N_{n}^{kl}}{\partial x_j}\frac{\partial N_{n}^{kl}}{\partial x_j}]\mid w( \varepsilon x)\mid^{2}dx\\
    &\leq & \int_{R^d}\mid w(\varepsilon x)\mid^2 dx+\int_{R^d}\mid \nabla_{x} w(\varepsilon x)\mid^2 dx \\
     &=&\int_{R^d}\mid w(x)\mid^2+\varepsilon^{2}\int_{R^d}\mid\nabla_{x}w(x)\mid^2 dx
    \end{eqnarray*}
 Taking $w(x)=(1-\tau^{\varepsilon}(x))\frac{\partial u_{k}^{(0)}}{\partial x_{l}}$, we obtain
 \begin{eqnarray*}
    &&\int_{\Omega}[(1-\tau^{\varepsilon}(x))(\frac{\partial N_{n}^{kl}}{\partial y_j}\frac{\partial{u_{k}^{(0)}}}{\partial x_{l}})]^2 dx\\
   &\leq &\int_{R^3}[(1-\tau^{\varepsilon}(x))(\frac{\partial N_{n}^{kl}}{\partial y_j}\frac{\partial{u_{k}^{(0)}}}{\partial x_{l}})]^2 dx\\
   &\leq &\int_{R^3}\mid (1-\tau^{\varepsilon}(x))\frac{\partial u_{k}^{0}}{\partial x_{l}}\mid^{2} dx+\varepsilon^{2}\int_{R^3}\mid \nabla((1-\tau^{\varepsilon}(x))\frac{\partial u_{k}^{0}}{\partial x_{l}})\mid^{2} dx\\
   &=&\int_{Q^\varepsilon}\mid (1-\tau^{\varepsilon}(x))\frac{\partial u_{k}^{0}}{\partial x_{l}}\mid^{2}dx+\varepsilon^2\int_{Q^\varepsilon}\mid \nabla((1-\tau^{\varepsilon}(x))\frac{\partial u_{k}^{0}}{\partial x_{l}})\mid^{2} dx\\
   &\leq&C\varepsilon\rho^{d-1}
    \end{eqnarray*}
   Therefore,
     \begin{eqnarray*}
    (I)+(II)+(III)+(IV)\leq C_{1}(\varepsilon^{2}+\varepsilon \rho^{d-1})(\parallel u_{i}^{(0)}\parallel^2_{H^2(\Omega)}+\parallel \varphi^{(0)}\parallel^2_{H^2(\Omega)})
    \end{eqnarray*}
   Similarly, we can get $\parallel\varphi_{(1)}^{(\varepsilon)}-W_{(1)}^{(\varepsilon)}\parallel_{H^{1}(\Omega)}\leq C{\varepsilon}^{\frac{1}{2}}$\\
 This completes the proof of Step 1.\\
For Step 2, there is no essential difference from the analysis of corresponding parts in second order elliptic equations in \citet{Oleinik}. Therefore, we omit the proof of this part.
Thus, it is a direct corollary from Step 1 and Step 2 that
\begin{eqnarray*}
        \parallel
        u_{i(1)}^{(\varepsilon)}-u_{i}^{(\varepsilon)}\parallel_{H^{1}(\Omega)}+\parallel
        \varphi_{(1)}^{(\varepsilon)}-\varphi^{(\varepsilon)}\parallel_{H^{1}(\Omega)}\leq
        C\varepsilon^{\frac{1}{2}}
      \end{eqnarray*}
This completes the proof of Lemma \ref{hom lemthree}.
\end{proof}
Thus, the convergence conclusion of asymptotic expansion is got in the sense of H convergence as follows.
\begin{theorem}\label{hom thm}
If $u_{i}^{(0)}\in C^{2}(\Omega)$, $\varphi^{(0)}\in C^{2}(\Omega)$,
        $N_{n}^{kl}\in H^1_{per}(Y)\cap L^{\infty}(R^d)$, $W_{n}^{l}\in H^1_{per}(Y)\cap
        L^{\infty}(R^d)$, $\varphi_{kl}\in H^1_{per}(Y)\cap L^{\infty}(R^d)$, $\psi_{l}\in
        H^1_{per}(Y)\cap L^{\infty}(R^d)$,
we have
      \begin{eqnarray*}
       u_{i}^{(\varepsilon)} \rightharpoonup u_{i}^{(0)},\quad \varphi^{(\varepsilon)} \rightharpoonup \varphi^{(0)} \quad in \quad H_{0}^{1}(\Omega) \quad as \quad \varepsilon  \to 0\\
      \sigma_{ij}^{(\varepsilon)} \rightharpoonup \sigma_{ij}^{(0)},\quad D_{i}^{(\varepsilon)} \rightharpoonup D_{i}^{(0)} \quad in \quad L^{2}(\Omega) \quad as \quad \varepsilon  \to  0
      \end{eqnarray*}
\end{theorem}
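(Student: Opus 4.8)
The proof builds directly on Lemma~\ref{hom lemthree}, so the strategy is in two parts: first, upgrade the strong $H^1$ estimate of Lemma~\ref{hom lemthree} to weak $H^1_0$ convergence of the fields by discarding the correctors; second, feed the expansions (\ref{asy orderone}) back into the constitutive relations to obtain weak $L^2$ convergence of the fluxes, identifying the weak limits with the homogenized fluxes through the cell-problem formulas (\ref{asy coefficient})--(\ref{asy cellb}).

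\textbf{Step 1: weak convergence of $u_i^{(\varepsilon)}$ and $\varphi^{(\varepsilon)}$.} First I would note that $u_{i(1)}^{(\varepsilon)}\rightharpoonup u_i^{(0)}$ and $\varphi_{(1)}^{(\varepsilon)}\rightharpoonup\varphi^{(0)}$ in $H^1(\Omega)$: by (\ref{asy orderone}) these expansions differ from $u_i^{(0)},\varphi^{(0)}$ by $\varepsilon$ times a function bounded in $L^2(\Omega)$ (using $N_n^{kl},W_n^l,\phi_{kl},\psi_l\in L^\infty$ and $u_i^{(0)},\varphi^{(0)}\in C^2$), hence the correction tends to $0$ in $L^2(\Omega)$; differentiating produces the terms $(\nabla_y N_n^{kl})(x/\varepsilon)\partial_l u_k^{(0)}+(\nabla_y W_n^l)(x/\varepsilon)\partial_l\varphi^{(0)}$ plus an $O_{L^2}(\varepsilon)$ remainder, and the oscillating principal part converges weakly to $0$ in $L^2(\Omega)$ because $\nabla_y N_n^{kl},\nabla_y W_n^l$ are $Y$-periodic with vanishing mean (and bounded in $L^2(\Omega)$ uniformly in $\varepsilon$), while the bounded continuous factors $\partial_l u_k^{(0)},\partial_l\varphi^{(0)}$ preserve weak $L^2$ convergence. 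Combining this with $\|u_i^{(\varepsilon)}-u_{i(1)}^{(\varepsilon)}\|_{H^1(\Omega)}+\|\varphi^{(\varepsilon)}-\varphi_{(1)}^{(\varepsilon)}\|_{H^1(\Omega)}\le C\varepsilon^{1/2}$ from Lemma~\ref{hom lemthree}, and with the weak closedness of $H^1_0(\Omega)$, gives $u_i^{(\varepsilon)}\rightharpoonup u_i^{(0)}$ and $\varphi^{(\varepsilon)}\rightharpoonup\varphi^{(0)}$ in $H^1_0(\Omega)$.

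\textbf{Step 2: weak convergence of $\sigma_{ij}^{(\varepsilon)}$ and $D_i^{(\varepsilon)}$.} Next I would insert the expansions into the constitutive relations. Lemma~\ref{hom lemthree} lets me replace $\partial_n u_k^{(\varepsilon)}$ and $\partial_k\varphi^{(\varepsilon)}$ by $\partial_n u_{k(1)}^{(\varepsilon)}$ and $\partial_k\varphi_{(1)}^{(\varepsilon)}$ up to $L^2(\Omega)$ errors that vanish as $\varepsilon\to0$; expanding those gradients, discarding the $O_{L^2}(\varepsilon)$ terms from differentiating the slow variables, and collecting the coefficients of $\partial u_k^{(0)}/\partial x_l$ and $\partial\varphi^{(0)}/\partial x_l$ via the definitions of $\tau_{ij}^{kl}$ and $\xi_{ij}^{l}$ in (\ref{asy cella})--(\ref{asy cellb}), I obtain
\[
\sigma_{ij}^{(\varepsilon)}(x)=\big(c_{ijkl}(x/\varepsilon)+\tau_{ij}^{kl}(x/\varepsilon)\big)\frac{\partial u_k^{(0)}}{\partial x_l}(x)+\big(e_{lij}(x/\varepsilon)+\xi_{ij}^{l}(x/\varepsilon)\big)\frac{\partial \varphi^{(0)}}{\partial x_l}(x)+r_{ij}^{(\varepsilon)}(x),
\]
with $\|r_{ij}^{(\varepsilon)}\|_{L^2(\Omega)}\to0$, and an entirely parallel identity for $D_i^{(\varepsilon)}$ with oscillating coefficients $e_{ikl}(x/\varepsilon)+d_i^{kl}(x/\varepsilon)$ and $-(\epsilon_{il}(x/\varepsilon)-g_{il}(x/\varepsilon))$. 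Since the correctors are only $H^1_{per}(Y)\cap L^\infty$, the periodic coefficients $c_{ijkl}+\tau_{ij}^{kl}$, $e_{lij}+\xi_{ij}^{l}$, $e_{ikl}+d_i^{kl}$, $\epsilon_{il}-g_{il}$ lie in $L^2_{per}(Y)$; for such $b$ one still has $b(x/\varepsilon)\rightharpoonup\langle b\rangle_Y$ weakly in $L^2(\Omega)$, and multiplying by the bounded continuous factors $\partial_l u_k^{(0)},\partial_l\varphi^{(0)}$ preserves weak $L^2$ convergence. By (\ref{asy coefficient}) the relevant averages are exactly $\langle c_{ijkl}+\tau_{ij}^{kl}\rangle_Y=c_{ijkl}^{(0)}$, $\langle e_{lij}+\xi_{ij}^{l}\rangle_Y=e_{lij}^{(0)}$, $\langle e_{ikl}+d_i^{kl}\rangle_Y=e_{ikl}^{(0)}$, $\langle\epsilon_{il}-g_{il}\rangle_Y=\epsilon_{il}^{(0)}$, so $\sigma_{ij}^{(\varepsilon)}\rightharpoonup c_{ijkl}^{(0)}\partial_l u_k^{(0)}+e_{lij}^{(0)}\partial_l\varphi^{(0)}=\sigma_{ij}^{(0)}$ and $D_i^{(\varepsilon)}\rightharpoonup e_{ikl}^{(0)}\partial_l u_k^{(0)}-\epsilon_{il}^{(0)}\partial_l\varphi^{(0)}=D_i^{(0)}$ in $L^2(\Omega)$.

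\textbf{Main obstacle.} The genuinely hard analytic point---the failure of $W^{1,\infty}$ regularity of the cell solutions---is already packaged inside Lemma~\ref{hom lemthree} (which in turn rests on Lemmas~\ref{hom lemone}--\ref{hom lemtwo}); beyond that, the only steps needing care rather than routine bookkeeping are the weak-$L^2$ oscillation fact $b(x/\varepsilon)\rightharpoonup\langle b\rangle_Y$ for merely $L^2_{per}$ coefficients $b$ (again forced by the correctors lying only in $L^\infty$, not $W^{1,\infty}$), and the index-matching that identifies the three limiting flux coefficients with $c_{ijkl}^{(0)}$, $e_{lij}^{(0)}$, $\epsilon_{ij}^{(0)}$ as defined in (\ref{asy coefficient}).
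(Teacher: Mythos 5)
Your proposal is correct and follows essentially the same route as the paper: weak $H^1$ convergence of the first-order expansions $u_{i(1)}^{(\varepsilon)},\varphi_{(1)}^{(\varepsilon)}$ (and of the fluxes via the constitutive relations and the mean-value property of periodic functions, which the paper delegates to the steps in \citet{Oleinik}), combined with the strong $O(\varepsilon^{1/2})$ closeness from Lemma~\ref{hom lemthree}. Your Step 2 simply writes out explicitly the flux identification that the paper cites rather than reproduces, so there is no substantive difference in approach.
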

\begin{proof}
The property of the mean value yields
\begin{eqnarray}\label{asy condone}
     u_{i(1)}^{(\varepsilon)} \rightharpoonup u_{i}^{(0)},\quad \varphi_{(1)}^{(\varepsilon)} \rightharpoonup \varphi^{(0)} \quad in \quad H^{1}(\Omega) \quad as \quad \varepsilon \to 0\\
     \sigma_{ij}^{(\varepsilon)} \rightharpoonup \sigma_{ij}^{(0)},\quad D_{i}^{(\varepsilon)} \rightharpoonup D_{i}^{(0)} \quad in \quad L^{2}(\Omega) \quad as \quad \varepsilon  \to  0
 \end{eqnarray}
following the steps in \citet{Oleinik}.\\
Then, by the conclusions in Lemma \ref{hom lemthree}, we have
 \begin{eqnarray}\label{asy condtwo}
     u_{i(1)}^{(\varepsilon)} \rightarrow u_{i}^{(\varepsilon)},\quad\varphi_{(1)}^{(\varepsilon)} \rightarrow \varphi^{(\varepsilon)}\quad in \quad H^{1}(\Omega) \quad as \quad \varepsilon  \to  0
 \end{eqnarray}
Thus, it is obviously that
 \begin{eqnarray*}
       u_{i}^{(\varepsilon)} \rightharpoonup u_{i}^{(0)}, \varphi^{(\varepsilon)} \rightharpoonup \varphi^{(0)} \quad in \quad H_{0}^{1}(\Omega) \quad as \quad \varepsilon  \to  0\\
      \sigma_{ij}^{(\varepsilon)} \rightharpoonup \sigma_{ij}^{(0)}, D_{i}^{(\varepsilon)} \rightharpoonup D_{i}^{(0)} \quad in \quad L^{2}(\Omega) \quad as \quad \varepsilon  \to  0
      \end{eqnarray*}
by (\ref{asy condone}) and (\ref{asy condtwo}). This completes the proof of Theorem \ref{hom thm}.
\end{proof}
\begin{remark}
It is not necessary to assume the solutions of the cell problems (\ref{asy cella}) and (\ref{asy cellb}) belong to $L^\infty(R^d)$ to certify the convergence of asymptotic expansion. However, to get the order of $\varepsilon$ in Lemma \ref{hom lemthree}, the conditions $ N_{n}^{kl}\in H^1_{per}(Y)\cap L^{\infty}(R^d)$, $W_{n}^{l}\in H^1_{per}(Y)\cap L^{\infty}(R^d)$, $\varphi_{kl}\in H^1_{per}(Y)\cap L^{\infty}(R^d)$, $\psi_{l}\in
        H^1_{per}(Y)\cap L^{\infty}(R^d)$ should be involved.
\end{remark}
So far, for those materials with periodic microstructure, the homogenized equation have been deduced from the microscopic
model of piezoelectric composite material under the frame of homogenization theory
and the corresponding convergence
analysis is complected.
\section{Error estimate of multi-scale modeling for piezoelectric composite materials}
In Section 3, we have designed multi-scale model for piezoelectric composite materials under the framework of HMM. However, HMM is an algorithm estimating the missing macro-scale data by solving the fine scale problem locally. Therefore, both error
estimate between HMM solutions and homogenization solutions and error estimate of the effective coefficients will be held in the following part. We restrict our discussion for these analysis to the periodic case, i.e., we assume that (\ref{periodic cond}) holds.
\begin{theorem}\label{hmm thm1}
Denote the solution of (\ref{macro equperiodic}) and the HMM solution by $(\mathbf{u}^{(0)},\varphi^{(0)})$ and $(\mathbf{u}^H,\varphi^{H})$, respectively. Let
 \begin{eqnarray*}
        &&e_{c}(HMM)=\max_{\substack { x_l\in K,K\in\mathscr{T}_{H}}}\parallel
        c_{ijkl}^{(0)}-c_{ijkl}^{H}\parallel\\
        &&e_{e}(HMM)=\max_{\substack {x_l\in K,K\in \mathscr{T}_{H}}}\parallel
        e_{ijk}^{(0)}-e_{ijk}^{H}\parallel\\
        &&e_{d}(HMM)=\max_{\substack {x_l\in K,K\in \mathscr{T}_{H}}}\parallel
        \epsilon_{ij}^{(0)}-e_{ijk}^{H}\parallel
      \end{eqnarray*}
 where $\parallel\cdot\parallel$ is the Euclidean norm. If $(\mathbf{u}^{(0)}$, $\varphi^{(0)})$ are sufficiently smooth, then there exists a constant C independent of $\varepsilon,\delta,H$, such that
  \begin{eqnarray*}
        \parallel u_{i}^{(0)}-u_{i}^{H}\parallel_{1}+\parallel
        \varphi^{(0)}-\varphi^{H}\parallel_{1}\quad\leq\quad
        C(H+e_{c}(HMM)+e_{e}(HMM)+e_{d}(HMM))
      \end{eqnarray*}
\end{theorem}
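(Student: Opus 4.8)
The plan is to recast the saddle-point system (\ref{hmm macroweak}) so that a first Strang lemma applies, in spite of the indefinite structure the paper emphasizes. First I would combine the two equations of (\ref{hmm macroweak}) into one bilinear form on $\mathbf{W}_H:=X_H\times X_H\times X_H\times X_H$,
\begin{equation*}
\mathcal{B}_H\big((\mathbf{u},\varphi),(\mathbf{v},\psi)\big):=c_H(\mathbf{u},\mathbf{v})+e_H(\mathbf{v},\varphi)-e_H(\mathbf{u},\psi)+d_H(\varphi,\psi),
\end{equation*}
so the HMM problem reads $\mathcal{B}_H((\mathbf{u}^H,\varphi^H),\cdot)=(\mathbf{f},\cdot)$ on $\mathbf{W}_H$, and define $\mathcal{B}_0$ analogously from the homogenized bilinear forms of (\ref{macro equperiodic}) on $(H_0^1(\Omega))^3\times H_0^1(\Omega)$. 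The decisive point is that the antisymmetric coupling makes the diagonal collapse: $\mathcal{B}_H((\mathbf{u},\varphi),(\mathbf{u},\varphi))=c_H(\mathbf{u},\mathbf{u})+d_H(\varphi,\varphi)$. By Lemma \ref{hmm lem1}(III) (applied at each quadrature point, with $\sum_\alpha w_\alpha=1$), Korn's inequality on $(H_0^1(\Omega))^3$ and the Poincaré inequality, this is $\gtrsim\|\mathbf{u}\|_1^2+\|\varphi\|_1^2$ with constants depending only on the uniform ellipticity/$L^\infty$ bounds of the data (hence independent of $\varepsilon,\delta,H$), while the $L^\infty$ bounds on $c^H,e^H,\epsilon^H$ and on the homogenized coefficients give boundedness of $\mathcal{B}_H$ and $\mathcal{B}_0$. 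Thus both problems are coercive and non-symmetric, and Lax--Milgram together with a Céa-type estimate is available.

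Second, I would invoke a first Strang lemma. For arbitrary $(\mathbf{v}_H,\psi_H)\in\mathbf{W}_H$, coercivity applied to $(\mathbf{u}^H-\mathbf{v}_H,\varphi^H-\psi_H)$, the Galerkin identities $\mathcal{B}_H((\mathbf{u}^H,\varphi^H),\cdot)=(\mathbf{f},\cdot)=\mathcal{B}_0((\mathbf{u}^{(0)},\varphi^{(0)}),\cdot)$ on $\mathbf{W}_H$, boundedness of $\mathcal{B}_0$, and the triangle inequality give
\begin{equation*}
\|(\mathbf{u}^{(0)}-\mathbf{u}^H,\varphi^{(0)}-\varphi^H)\|_1\le C\inf_{(\mathbf{v}_H,\psi_H)\in\mathbf{W}_H}\Big(\|(\mathbf{u}^{(0)}-\mathbf{v}_H,\varphi^{(0)}-\psi_H)\|_1+\sup_{(\mathbf{w}_H,\chi_H)\ne0}\frac{|(\mathcal{B}_0-\mathcal{B}_H)((\mathbf{v}_H,\psi_H),(\mathbf{w}_H,\chi_H))|}{\|(\mathbf{w}_H,\chi_H)\|_1}\Big).
\end{equation*}
I would then take $(\mathbf{v}_H,\psi_H)=(\Pi_H\mathbf{u}^{(0)},\Pi_H\varphi^{(0)})$, the nodal $P_1$-interpolant; since $(\mathbf{u}^{(0)},\varphi^{(0)})$ is assumed smooth, the standard interpolation estimate bounds the approximation term by $CH(\|\mathbf{u}^{(0)}\|_2+\|\varphi^{(0)}\|_2)$, and also $\|(\Pi_H\mathbf{u}^{(0)},\Pi_H\varphi^{(0)})\|_1\le C(\|\mathbf{u}^{(0)}\|_1+\|\varphi^{(0)}\|_1)$.

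Third, I would estimate the consistency (variational-crime) term. On each $K\in\mathscr{T}_H$ the strains $s_{ij}(\mathbf{v}_H),s_{kl}(\mathbf{w}_H)$ and the gradients $\partial_k\psi_H,\partial_j\chi_H$ are constant, and under the global periodicity assumption (\ref{periodic cond}) the homogenized coefficients $c^{(0)}_{ijkl},e^{(0)}_{lij},\epsilon^{(0)}_{ij}$ are constants, so the quadrature in (\ref{int}) is exact for $\mathcal{B}_0$; hence the only discrepancy on $K$ is $\sum_{x_\alpha\in K}w_\alpha\big(c^{H}_{ijkl}(x_\alpha)-c^{(0)}_{ijkl}\big)$ together with its $e$- and $\epsilon$-analogues, which are bounded in norm by $e_c(\mathrm{HMM})$, $e_e(\mathrm{HMM})$, $e_d(\mathrm{HMM})$ respectively (again using $\sum_\alpha w_\alpha=1$). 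Summing over $K$ and Cauchy--Schwarz yields
\begin{equation*}
|(\mathcal{B}_0-\mathcal{B}_H)((\mathbf{v}_H,\psi_H),(\mathbf{w}_H,\chi_H))|\le C\big(e_c(\mathrm{HMM})+e_e(\mathrm{HMM})+e_d(\mathrm{HMM})\big)\,\|(\mathbf{v}_H,\psi_H)\|_1\,\|(\mathbf{w}_H,\chi_H)\|_1,
\end{equation*}
and combining the three steps (with the uniform bound on $\|(\Pi_H\mathbf{u}^{(0)},\Pi_H\varphi^{(0)})\|_1$) gives the claimed estimate.

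The main obstacle is Step~1, precisely where the piezoelectric coupling matters: because the continuous and discrete problems are genuine saddle-point problems, Céa's lemma is not directly available, and a naive Brezzi analysis would demand separate stability of the coupling form. The clean workaround is the diagonal cancellation $\mathcal{B}_H((\mathbf{u},\varphi),(\mathbf{u},\varphi))=c_H(\mathbf{u},\mathbf{u})+d_H(\varphi,\varphi)$, reducing well-posedness and stability to Lemma \ref{hmm lem1}(III); one must, however, verify that the coercivity constants $\alpha_H,\beta_H$ and the continuity constant of $\mathcal{B}_H$ are \emph{uniform} in $\varepsilon,\delta,H$, which follows from the uniform $L^\infty$/ellipticity bounds on the microscopic data via energy estimates for the cell problems (\ref{hmm modone})--(\ref{hmm modtwo}), and from Korn's inequality with homogeneous Dirichlet data. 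A secondary point is that without global periodicity the homogenized coefficients are only $x$-dependent and an extra $O(H)$ quadrature-consistency contribution appears; under the standing assumption (\ref{periodic cond}) this term vanishes, which is why the stated bound carries no separate quadrature term.
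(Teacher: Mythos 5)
Your proposal is correct and is essentially the paper's own argument: the paper also adds the two equations of (\ref{hmm macroweak}) and (\ref{hmm proof2(3)}) into a single nonsymmetric form, tests with $(\mathbf{u}^H-\Pi\mathbf{u}^{(0)},\varphi^H-\Pi\varphi^{(0)})$ so that the antisymmetric $e_H$-coupling cancels on the diagonal, bounds below via Lemma \ref{hmm lem1}(III) with Friedrichs/Korn, and bounds above via H\"older, interpolation estimates and the coefficient errors $e_c,e_e,e_d$ — i.e.\ exactly the computation your first-Strang-lemma packaging unwinds to. Your remark that the coercivity and continuity constants must be uniform in $\varepsilon,\delta,H$ is a fair point which the paper leaves implicit.
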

\begin{proof}
It is obviously that
\begin{eqnarray}
&&\parallel u_{i}^{(0)}-u_{i}^{H}\parallel_{H^1(\Omega)}\quad\leq\quad \parallel u_{i}^{(0)}-\Pi u_{i}^{(0)}\parallel_{H^1(\Omega)}+\parallel \Pi u_{i}^{(0)}-u_{i}^{H}\parallel_{H^1(\Omega)}\\
&&\parallel\varphi^{(0)}-\varphi^{H}\parallel_{H^1(\Omega)}\quad\leq\quad\parallel\varphi-\Pi\varphi^{(0)}\parallel_{H^1(\Omega)}+\parallel\Pi\varphi^{(0)}-\varphi^{H}\parallel_{H^1(\Omega)}
\end{eqnarray}
where $\Pi$ is 1st-order Lagrange interpolate operator.\\
Using interpolation error estimate in \citet{FEM2}, we have
\begin{eqnarray}
&&\parallel u_{i}^{(0)}-\Pi u_{i}^{(0)}\parallel_{H^1(\Omega)}\quad\leq\quad CH\mid  u_{i}^{(0)} \mid_2\\
&&\parallel\varphi^{(0)}-\Pi\varphi^{(0)}\parallel_{H^1(\Omega)}\quad\leq\quad CH\mid \varphi^{(0)} \mid_2
\end{eqnarray}
Then, it remains analysis of $\parallel \Pi u_{i}^{(0)}-u_{i}^{H}\parallel_{H^1(\Omega)}$ and $\parallel\varphi^{H}-\Pi\varphi^{(0)}\parallel_{H^1(\Omega)}$\\
By adding, (\ref{hmm macroweak}) yields
\begin{equation}\label{hmm proof2(2)}
c_{H}(\mathbf{u}^H,\mathbf{v}^H)+e_{H}(\mathbf{v}^H,\varphi^H)-e_{H}(\mathbf{u}^H,\psi^H)+d_{H}(\varphi^H,\psi^H)=(\mathbf{f},\mathbf{v}^H)
\end{equation}
For homogenized equations (\ref{macro equperiodic}), the variational form is
 \begin{align}\label{hmm proof2(3)}
      \left\{\begin{array} {l@{\quad}l} \displaystyle
          c(\mathbf{u}^{(0)},\mathbf{v}^{(0)})+e(\mathbf{v}^{(0)},\varphi^{(0)})=(\mathbf{f},\mathbf{v}^{(0)})\qquad&\forall\mathbf{v}^{(0)}\in
          H^{1}_{0}\times H^{1}_{0}\times H^{1}_{0}\\
          \displaystyle
          -e(\mathbf{u}^{(0)},\psi^{(0)})+d(\varphi^{(0)},\psi^{(0)})=0\qquad
          &\forall\psi^{(0)}\in H^{1}_{0}
        \end{array}\right.
    \end{align}
where
  \begin{eqnarray*}
      c(\mathbf{u}^{(0)},\mathbf{v}^{(0)})&=&\int_{\Omega}c_{ijkl}^{(0)}s_{ij}(\mathbf{u}^{(0)})s_{kl}(\mathbf{v}^{(0)})dx\\
      e(\mathbf{v}^{(0)},\varphi^{(0)})&=&\int_{\Omega}e_{ijk}^{(0)}s_{ij}(\mathbf{v}^{(0)})\partial_{k}\varphi^{(0)} dx\\
      d(\varphi^{(0)},\psi^{(0)})&=&\int_{\Omega}\epsilon_{ij}^{(0)}\partial_{i}\varphi^{(0)}\partial_{j}\psi^{(0)} dx
    \end{eqnarray*}
By adding and $X_H\subset H^{1}_{0}$, (\ref{hmm proof2(3)}) yields
\begin{eqnarray}\label{hmm proof2(4)}
&&c(\mathbf{u}^{(0)},\mathbf{v}^H)+e(\mathbf{v}^H,\varphi^{(0)})-e(\mathbf{u}^{(0)},\psi^H)+d(\varphi^{(0)},\psi^H)=(\mathbf{f},\mathbf{v}^H)
\end{eqnarray}
for any $\mathbf{v}^H\in X_{H}\times X_{H}\times X_{H}$ and $\psi^H\in X_{H}$
$(\ref{hmm proof2(4)})-(\ref{hmm proof2(2)})$, we have,
\begin{align}
&(c(\mathbf{u}^{(0)},\mathbf{v}^H)-c_{H}(\mathbf{u}^H,\mathbf{v}^H))+(d(\varphi^{(0)},\psi^H)-d_{H}(\varphi^H,\psi^H))\nonumber\\
+&(e(\mathbf{v}^H,\varphi^{(0)})-e_{H}(\mathbf{v}^H,\varphi^H))-(e(\mathbf{u}^{(0)},\psi^H)-e_{H}(\mathbf{u}^H,\psi^H))=0\label{hmm proof2(5)}
\end{align}
By (\ref{hmm proof2(5)}), it is obviously that
\begin{align}\label{hmm proof2(6)}
&(c(\mathbf{u}^{(0)},\mathbf{v}^H)-c_H(\mathbf{u}^{(0)},\mathbf{v}^H))+c_H(\mathbf{u}^{(0)}-\Pi\mathbf{u}^{(0)},\mathbf{v}^H)\nonumber\\
+&(d(\varphi^{(0)},\psi^H)-d_H(\varphi^{(0)},\psi^H))+d_{H}(\varphi^{(0)}-\Pi\varphi^{(0)},\psi^H))\nonumber\\
+&(e(\mathbf{v}^H,\varphi^{(0)})-e_H(\mathbf{v}^H,\varphi^{(0)}))+e_{H}(\mathbf{v}^H,\varphi^{(0)}-\Pi\varphi^{(0)}))\nonumber\\
-&(e(\mathbf{u}^{(0)},\psi^H)-e_H(\mathbf{u}^{(0)},\psi^H))-e_{H}(\mathbf{u}^{(0)}-\Pi\mathbf{u}^{(0)},\psi^H)\nonumber\\
=&c_H(\mathbf{u}^H-\Pi\mathbf{u}^{(0)},\mathbf{v}^H)+d_{H}(\varphi^H-\Pi\varphi^{(0)},\psi^H))+e_{H}(\mathbf{v}^H,\varphi^H-\Pi\varphi^{(0)}))-e_{H}(\mathbf{u}^H-\Pi\mathbf{u}^{(0)},\psi^H)
\end{align}
Setting $\mathbf{v}^H=\mathbf{u}^H-\Pi\mathbf{u}^{(0)}$,$\psi^H=\varphi^H-\Pi\varphi^{(0)}$ in (\ref{hmm proof2(6)}), we have
\begin{eqnarray*}
\mbox{The right hand side of }(\ref{hmm proof2(6)})\geq \alpha \parallel\mathbf{u}^H-\Pi\mathbf{u}^{(0)}\parallel^2_{1}+\beta\parallel\varphi^H-\Pi\varphi^{(0)}\parallel^2_{1}
\end{eqnarray*}
by Fridriches's inequality and Lemma \ref{hmm lem1}.\\
Using H$\ddot{o}$lder inequality, we obtain
\begin{align*}
&\mbox{The left hand side of }(\ref{hmm proof2(6)})\\
\leq & e_c(HMM)\mid \mathbf{u}^{(0)}\mid_1\mid\mathbf{u}^H-\Pi\mathbf{u}^{(0)}\mid_1+C_1\mid\mathbf{u}^H-\Pi\mathbf{u}^{(0)}\mid_1\mid\mathbf{u}^{(0)}-\Pi\mathbf{u}^{(0)}\mid_1\\
+&e_d(HMM)\mid\varphi^{(0)}\mid_1\mid\varphi^H-\Pi\varphi^{(0)}\mid_1+C_2\mid\varphi^{(0)}-\Pi\varphi^{(0)}\mid_1\mid\varphi^H-\Pi\varphi^{(0)}\mid_1\\
+&e_e(HMM)\mid\varphi^{(0)}\mid_1\mid\mathbf{u}^H-\Pi\mathbf{u}^{(0)}\mid_1+C_3\mid\varphi^{(0)}-\Pi\varphi^{(0)}\mid_1\mid\mathbf{u}^H-\Pi\mathbf{u}^{(0)}\mid_1\\
+&e_e(HMM)\mid \mathbf{u}^{(0)}\mid_1\mid\varphi^H-\Pi\varphi^{(0)}\mid_1+C_4\mid\varphi^H-\Pi\varphi^{(0)}\mid_1\mid\mathbf{u}^{(0)}-\Pi\mathbf{u}^{(0)}\mid_1
\end{align*}
Then, by Cauchy inequality and interpolation error estimate in \citet{FEM2}, we have
\begin{eqnarray*}
\parallel \Pi u_{i}^{(0)}-u_{i}^{H}\parallel_{H^1(\Omega)}+\parallel\varphi^{H}-\Pi\varphi^{(0)}\parallel_{H^1(\Omega)}\leq C(H+e_c(HMM)+e_d(HMM)+e_e(HMM))
\end{eqnarray*}
This completes the proof of Theorem \ref{hmm thm1}.
\end{proof}
\begin{remark}
At this stage, no assumption on the form of $c_{ijkl}^{(\varepsilon)}$, $e_{ijk}^{(\varepsilon)}$ and $\epsilon_{ij}^{(\varepsilon)}$ is needed. $(\mathbf{u}^{(0)},\varphi^{(0)})$ can be the
solution of an arbitrary macroscopic equation system with the same right-hand side as
in (\ref{micro equ})
\end{remark}

\begin{theorem}\label{hmmthm2}
For those piezoelectric composite materials with periodic microstructure, if $N_n^{kl}\in H^1_{per}(Y)\cap L^{\infty}(R^d)$ and $\phi_{kl}\in H^1_{per}(Y)\cap L^{\infty}(R^d)$, we have the following estimate
     \begin{eqnarray*}
        &&e_{c}(HMM)\leq C_{c}\frac{\varepsilon}{\delta}\\
        &&e_{e}(HMM)\leq C_{e}\frac{\varepsilon}{\delta}\\
        &&e_{d}(HMM)\leq C_{d}\frac{\varepsilon}{\delta}
      \end{eqnarray*}
where $C_{c}$, $C_{e}$, $C_{d}$ are constants independent of ~$\varepsilon$ and $\delta$.
\end{theorem}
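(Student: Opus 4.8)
The plan is to compare, pointwise at each quadrature node $x_\alpha$, the HMM effective tensors defined in (\ref{hmm coef1})--(\ref{hmm coef2}) with the homogenized tensors (\ref{asy coefficient}), by exploiting the fact that both arise from cell problems of the same structure but posed on different domains: $I_\delta(x_\alpha)$ with affine Dirichlet data for HMM versus $Y$ with periodicity for homogenization. First I would rescale the HMM cell problems (\ref{hmm lemone1}) and (\ref{hmm lemone2}) to the unit cube, so that the corrector $\mathbf{N}^{kl}$ lives on a cube of side $\delta/\varepsilon$ with affine boundary data, and compare it with the periodic corrector on $Y$ extended periodically. The difference $\mathbf{w} := \mathbf{N}^{kl}_{\mathrm{HMM}} - \mathbf{N}^{kl}_{\mathrm{per}}$ (and the analogous difference for the electric correctors $\Phi_{kl}$) solves a piezoelectric cell system with zero right-hand side but nonzero boundary data equal to $-\mathbf{N}^{kl}_{\mathrm{per}}$ on $\partial I_\delta$; by the coercivity/saddle-point structure established in the proof of Lemma \ref{hmm lem1} and a standard energy estimate, $\|\nabla \mathbf{w}\|_{L^2(I_\delta)}^2$ is controlled by a boundary-layer integral of $|\nabla \mathbf{N}^{kl}_{\mathrm{per}}|^2$ over a shell of width $O(\varepsilon)$ near $\partial I_\delta$.

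The key step is then to bound that boundary-layer term. This is exactly where the hypotheses $N_n^{kl},\phi_{kl}\in H^1_{per}(Y)\cap L^\infty(R^d)$ enter and where Lemmas \ref{hom lemone} and \ref{hom lemtwo} do the work: the multiplication operator $[\nabla \mathbf{N}^{kl}]$ is bounded $H^1(R^d)\to L^2(R^d;C^d)$, so testing against the characteristic-type function of the boundary shell (smoothed, as the truncation $\tau^\varepsilon$ in the proof of Lemma \ref{hom lemthree}) gives $\int_{\text{shell}}|\nabla \mathbf{N}^{kl}_{\mathrm{per}}|^2 \le C(|\text{shell}| + \varepsilon^2 |\partial(\text{shell})|)$, and since the shell has measure $O(\varepsilon \delta^{d-1})$ in the unrescaled variables we get $\|\nabla \mathbf{w}\|_{L^2(I_\delta)}^2 \le C \varepsilon \delta^{d-1}$. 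Dividing by $|I_\delta| = \delta^d$ in the averaging formulas (\ref{hmm coef1})--(\ref{hmm coef2}) produces the factor $\varepsilon/\delta$. The electric correctors and the second cell problem (\ref{hmm modtwo}) are treated identically using Lemma \ref{hom lemtwo}.

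Finally I would assemble the estimate: writing $c^H_{ijkl}(x_\alpha) - c^{(0)}_{ijkl}$ as $\langle c^{(\varepsilon)}_{ijnh}\partial_h N^{kl}_{n,\mathrm{HMM}} + e^{(\varepsilon)}_{hij}\partial_h \Phi_{kl,\mathrm{HMM}}\rangle_{I_\delta} - \langle c_{ijnh}\partial_h N^{kl}_{n,\mathrm{per}} + e_{hij}\partial_h\phi_{kl,\mathrm{per}}\rangle_Y$, the difference of the two averages splits into (i) the difference of the correctors, bounded by $\|\nabla\mathbf{w}\|_{L^2}/|I_\delta|^{1/2} \le C\sqrt{\varepsilon/\delta}$ via Cauchy--Schwarz and the $L^\infty$ bound on the coefficients, and (ii) the discrepancy between averaging the periodic integrand over $I_\delta$ versus over $Y$, which by periodicity is itself $O(\varepsilon/\delta)$. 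Collecting the two contributions gives $e_c(HMM)\le C_c\,\varepsilon/\delta$, and the same argument with (\ref{hmm coef1})${}_2$, (\ref{hmm coef2}) yields the bounds for $e_e(HMM)$ and $e_d(HMM)$. The main obstacle is step two — making the boundary-layer estimate rigorous without $W^{1,\infty}$ regularity of the correctors — and this is precisely what Lemmas \ref{hom lemone}--\ref{hom lemtwo} were designed to circumvent.
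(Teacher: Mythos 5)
Your overall strategy---compare the HMM cell solutions on $I_\delta(x_\alpha)$ with the periodic correctors, treat the difference as a boundary-layer remainder with data $-\mathbf{N}^{kl}_{\mathrm{per}}$, $-\phi_{kl}$ on $\partial I_\delta$, and invoke Lemmas \ref{hom lemone}--\ref{hom lemtwo} to compensate for the lack of $W^{1,\infty}$ regularity---is the same skeleton the paper uses (the remainder is exactly the pair $(\theta^{kl},\Theta_{kl})$ in (\ref{hmm proof2(8)})). But your assembly step contains a genuine gap: in contribution (i) you estimate the remainder's effect on the averaged flux by Cauchy--Schwarz, obtaining $\|\nabla\mathbf{w}\|_{L^2(I_\delta)}/|I_\delta|^{1/2}\le C\sqrt{\varepsilon/\delta}$, and then assert the conclusion $e_c(HMM)\le C\varepsilon/\delta$. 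That does not follow: a term of size $\sqrt{\varepsilon/\delta}$ dominates $\varepsilon/\delta$, so your argument as written only proves the weaker bound $e_c(HMM)\le C\sqrt{\varepsilon/\delta}$. The linear (non-symmetric) representation $c^H_{ijkl}=\langle c^{(\varepsilon)}_{ijnh}\partial_h P^{kl}_n+e^{(\varepsilon)}_{hij}\partial_h\Phi_{kl}\rangle_{I_\delta}$ simply cannot give the full rate when the remainder is hit with Cauchy--Schwarz; this is precisely the square-root loss the paper is engineered to avoid.

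The missing idea is to use the \emph{symmetric energy representations} of both tensors, i.e.\ (\ref{hmm ela2})--(\ref{hmm ela3}) from the proof of Lemma \ref{hmm lem1} and their periodic analogues (\ref{hmm ce2})--(\ref{hmm ce3}), and to combine the two resulting expressions for $c^H_{ijkl}-c^{(0)}_{ijkl}$ (the paper takes $2\times$(\ref{hmm cc12})$-$(\ref{hmm cc22})) so that the error splits as $I_0+I_1+I_2$: $I_0$ is the $I_\delta$-versus-$Y$ averaging discrepancy of the purely periodic integrand, which you did identify and which is $O(\varepsilon/\delta)$; $I_2$ collects terms \emph{quadratic} in the remainder, so the energy bound $\|\nabla(\varepsilon\theta)\|^2_{L^2(I_\delta)}\le C\varepsilon\delta^{d-1}$ already yields $O(\varepsilon/\delta)$ after dividing by $|I_\delta|$; and $I_1$ pairs $\nabla\theta$, $\nabla\Theta$ against the periodic fluxes $c^{(\varepsilon)}W+e^{(\varepsilon)}M$ and $e^{(\varepsilon)}W-\epsilon^{(\varepsilon)}M$, which are divergence-free by the cell problems (\ref{asy cella}), so Green's formula converts $I_1$ into a boundary integral over $\partial I_\delta$ involving the traces $\theta^{kl}=-N^{kl}_n$, $\Theta_{kl}=-\phi_{kl}$; the $L^\infty$ hypothesis on the correctors then gives $|I_1|\le C\varepsilon|\partial I_\delta|/|I_\delta|\le C\varepsilon/\delta$. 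Without this symmetrization-plus-integration-by-parts step (or an equivalent device), your proposal stalls at $\sqrt{\varepsilon/\delta}$; with it, your boundary-layer energy estimate and your treatment of $I_0$ slot in exactly where the paper uses them.
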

\begin{proof}
By (\ref{hmm modone}) and (\ref{hmm coef1}), we obtain
\begin{align}
 c_{ijkl}^{H}&=\langle c_{smnh}^{(\varepsilon)}\frac{\partial{P_{n}^{kl}}}{\partial{x_{h}}}\frac{\partial{P_{s}^{ij}}}{\partial{x_{m}}}+e_{hsm}^{(\varepsilon)}\frac{\partial{\Phi_{kl}}}{\partial{x_{h}}}\frac{\partial{P_{s}^{ij}}}{\partial{x_{m}}}\rangle_{I_{\delta}}\label{hmm ela2}\\ &=\langle c_{smnh}^{(\varepsilon)}\frac{\partial{P_{n}^{kl}}}{\partial{x_{h}}}\frac{\partial{P_{s}^{ij}}}{\partial{x_{m}}}+\epsilon_{hm}^{(\varepsilon)}\frac{\partial{\Phi_{kl}}}{\partial{x_{h}}}\frac{\partial{\Phi_{ij}}}{\partial{x_{m}}}\rangle_{I_{\delta}}\label{hmm ela3}
    \end{align}
Similarly, by (\ref{asy cella}) and (\ref{asy coefficient}), we have
\begin{align}
c_{ijkl}^{(0)}&=\langle c_{smnh}(y)(\frac{\partial N_{s}^{ij}}{\partial y_{m}}+\delta_{is}\delta_{jm})(\frac{\partial N_{n}^{kl}}{\partial y_{h}}+\delta_{kn}\delta_{hl})+e_{hsm}\frac{\partial \phi_{kl}}{\partial y_{h}}(\frac{\partial N_{s}^{ij}}{\partial y_{m}}+\delta_{is}\delta_{jm})\rangle_{Y}\label{hmm ce2}\\
  &=\langle c_{smnh}(y)(\frac{\partial N_{s}^{ij}}{\partial y_{m}}+\delta_{is}\delta_{jm})(\frac{\partial N_{n}^{kl}}{\partial y_{h}}+\delta_{kn}\delta_{hl})+\epsilon_{hm}\frac{\partial \phi_{kl}}{\partial y_{h}}\frac{\partial \phi_{ij}}{\partial y_{m}}\rangle_{Y}\label{hmm ce3}
  \end{align}
The solution of (\ref{hmm modone}), $(\mathbf{P}^{kl},\Phi_{kl})$ has the following expansion,
\begin{eqnarray}
  &&P_{n}^{kl}={P_{n}^{kl}}^{(0)}+\varepsilon(N_n^{kj}(y)\frac{\partial {P_{n}^{kl}}^{(0)}(x)}{\partial x_j}+W_n^{j}(y)\frac{\partial \Phi_{kl}^{(0)}(x)}{\partial x_j})+\varepsilon\theta_{n}^{kl}\\
  &&\Phi_{kl}=\Phi_{kl}^{(0)}+\varepsilon(\phi_{nj}(y)\frac{\partial  {P_{n}^{kl}}^{(0)}(x)}{\partial x_j}+\psi_{j}(y)\frac{\partial  \Phi_{kl}^{(0)}(x)}{\partial x_j})+\varepsilon\Theta_{kl}
  \end{eqnarray}
where $({\mathbf{P}^{kl}}^{(0)},\Phi_{kl}^{(0)})$ is the solution of the following equation system
\begin{align}\label{hmm proof2(7)}
      \left\{\begin{array} {l@{\quad}l} \displaystyle
          -\frac{\partial}{\partial{x_{j}}}(c_{ijnh}^{(0)}\frac{\partial{{P_{n}^{kl}}^{(0)}}}{\partial{x_{h}}}+e_{hij}^{(0)}\frac{\partial{{\Phi_{kl}}^{(0)}}}{\partial{x_{h}}})=0\qquad
          &\mbox{
            $x\in I_{\delta}(x_{\alpha})$}\\\\
          \displaystyle
          \frac{\partial}{\partial{x_{i}}}(e_{inh}^{(0)}\frac{\partial{{P_{n}^{kl}}^{(0)}}}{\partial
            x_{h}}-\epsilon_{ih}^{(0)}\frac{\partial{{\Phi_{kl}}^{(0)}}}{\partial{x_{h}}})=0\qquad
          &\mbox{
            $x\in I_{\delta}(x_{\alpha})$}\\\\
          \displaystyle {P_{k}^{kl}}^{(0)}=x_{l} \qquad {P_{n}^{kl}}^{(0)}=0(n\neq
          k)\qquad&\mbox{
            $x\in\partial I_{\delta}(x_{\alpha})$}\\\\
          \displaystyle {\Phi_{kl}}^{(0)}=0\qquad&\mbox{ $x\in\partial
            I_{\delta}(x_{\alpha})$}
        \end{array}\right.
    \end{align}
We know that $c_{ijnh}^{(0)},e_{inh}^{(0)},\epsilon_{ih}^{(0)}$ are constants under the assumption (\ref{periodic cond}).
Thus, we can write out the solution of (\ref{hmm proof2(7)}) explicitly as follows by the existent and unique property of the solution of (\ref{hmm proof2(7)}).
 \begin{eqnarray}
&&{P_{k}^{kl}}^{(0)}=x_{l} \qquad {P_{n}^{kl}}^{(0)}=0(n\neq
          k)\qquad{\Phi_{kl}}^{(0)}=0\qquad\mbox{on $
            I_{\delta}(x_{\alpha})$}
  \end{eqnarray}
Then,
\begin{eqnarray}
\nabla P_{k}^{kl}&=&e_l+\nabla_y N_k^{kl}(y)+\nabla(\varepsilon\theta_{k}^{kl})=\mathbf{W_{k}^{kl}}+\nabla(\varepsilon\theta_{k}^{kl})\\
\nabla P_{n}^{kl}&=&0+\nabla_y N_n^{kl}(y)+\nabla(\varepsilon\theta_{n}^{kl})=\mathbf{W_{n}^{kl}}+\nabla(\varepsilon\theta_{n}^{kl})\\
\nabla \Phi_{kl}&=&0+\nabla_y \phi_{kl}(y)+\nabla(\varepsilon\Theta_{kl})=\mathbf{M_{kl}}+\nabla(\varepsilon\Theta_{kl})
 \end{eqnarray}
where $(\mathbf{\theta}^{kl},\Theta_{kl})$ satisfies the following equation system
\begin{align}\label{hmm proof2(8)}
      \left\{\begin{array} {l@{\quad}l} \displaystyle
          -\frac{\partial}{\partial{x_{j}}}(c_{ijnh}^{(\varepsilon)}\frac{\partial{\theta_{n}^{kl}}}{\partial{x_{h}}}+e_{hij}^{(\varepsilon)}\frac{\partial{\Theta_{kl}}}{\partial{x_{h}}})=0\qquad
          &\mbox{
            $x\in I_{\delta}(x_{\alpha})$}\\\\
          \displaystyle
          \frac{\partial}{\partial{x_{i}}}(e_{inh}^{(\varepsilon)}\frac{\partial{\theta_{n}^{kl}}}{\partial
            x_{h}}-\epsilon_{ih}^{(\varepsilon)}\frac{\partial{\Theta_{kl}}}{\partial{x_{h}}})=0\qquad
          &\mbox{
            $x\in I_{\delta}(x_{\alpha})$}\\\\
          \displaystyle \theta_{n}^{kl}=-N_{n}^{kl}\qquad&\mbox{ $x\in\partial
            I_{\delta}(x_{\alpha})$}\\\\
          \displaystyle \Theta_{kl}=-\phi_{kl}\qquad&\mbox{ $x\in\partial
            I_{\delta}(x_{\alpha})$}
        \end{array}\right.
    \end{align}
(\ref{hmm ce2})-(\ref{hmm ela2}), we have
\begin{align}
\displaystyle&c_{ijkl}^{H}-c_{ijkl}^{(0)}\nonumber\\
\displaystyle=&\langle c^{(\varepsilon)}_{smnh}((W_{n}^{kl})_h+\varepsilon\frac{\partial\theta_{n}^{kl}}{\partial x_h}))((W_{s}^{ij})_m+\varepsilon\frac{\partial\theta_{s}^{ij}}{\partial x_m}))\nonumber\\
\displaystyle&+e_{smh}((M_{kl})_h+\varepsilon\frac{\partial \Theta_{kl}}{\partial x_h})((W_{s}^{ij})_m+\varepsilon\frac{\partial\theta_{s}^{ij}}{\partial x_m}))\rangle_{I_\delta}\nonumber\\
\displaystyle&-\langle c_{smnh}(y)(W_{n}^{kl})_h(W_{s}^{ij})_m+e_{smh}(y)(M_{kl})_h(W_{s}^{ij})_m\rangle_{Y}\label{hmm cc11}\\
\displaystyle=&\langle c^{(\varepsilon)}_{smnh}(y)(W_{n}^{kl})_h(W_{s}^{ij})_m+e^{(\varepsilon)}_{smh}(y)(M_{kl})_h(W_{s}^{ij})_m\rangle_{I_\delta}\nonumber\\
\displaystyle&-\langle c_{smnh}(y)(W_{n}^{kl})_h(W_{s}^{ij})_m+e_{smh}(y)(M_{kl})_h(W_{s}^{ij})_m\rangle_{Y}\nonumber\\
\displaystyle&+\varepsilon\langle(c^{(\varepsilon)}_{smnh}(W_{n}^{kl})_h+e^{(\varepsilon)}_{smh}(M_{kl})_h)\frac{\partial\theta_{s}^{ij}}{\partial x_m}\rangle_{I_\delta}\nonumber\\
\displaystyle&+\varepsilon\langle c^{(\varepsilon)}_{smnh}\frac{\partial\theta_{n}^{kl}}{\partial x_h}(W_{s}^{ij})_m+e^{(\varepsilon)}_{smh}\frac{\partial \Theta_{kl}}{\partial x_h}(W_{s}^{ij})_m\rangle_{I_\delta}\nonumber\\
\displaystyle&+\varepsilon^{2}\langle c^{(\varepsilon)}_{smnh}\frac{\partial\theta_{n}^{kl}}{\partial x_h}\frac{\partial\theta_{s}^{ij}}{\partial x_m}+e^{(\varepsilon)}_{smh}\frac{\partial \Theta_{kl}}{\partial x_h}\frac{\partial\theta_{s}^{ij}}{\partial x_m}\rangle_{I_\delta}\label{hmm cc12}
 \end{align}
(\ref{hmm ce3})-(\ref{hmm ela3}), we obtain
\begin{align}
&c_{ijkl}^{H}-c_{ijkl}^{(0)}\nonumber\\
=&\langle c^{(\varepsilon)}_{smnh}((W_{n}^{kl})_h+\varepsilon\frac{\partial\theta_{n}^{kl}}{\partial x_h}))((W_{s}^{ij})_m+\varepsilon\frac{\partial\theta_{s}^{ij}}{\partial x_m}))\nonumber\\
&+\epsilon^{(\varepsilon)}_{hm}((M_{ij})_m+\varepsilon\frac{\partial \Theta_{ij}}{\partial x_m})((M_{kl})_h+\varepsilon\frac{\partial \Theta_{kl}}{\partial x_h})\rangle_{I_\delta}\nonumber\\
&-\langle c_{smnh}(y)(W_{n}^{kl})_h(W_{s}^{ij})_m+\epsilon_{hm}(y)(M_{ij})_m(M_{kl})_h\rangle_{Y}\label{hmm cc21}\\
=&\langle c^{(\varepsilon)}_{smnh}(W_{n}^{kl})_h(W_{s}^{ij})_m+\epsilon^{(\varepsilon)}_{hm}(M_{ij})_m(M_{kl})_h\rangle_{I_\delta}\nonumber\\
&-\langle c_{smnh}(y)(W_{n}^{kl})_h(W_{s}^{ij})_m+\epsilon_{hm}(y)(M_{ij})_m(M_{kl})_h\rangle_{Y}\nonumber\\
&+2\varepsilon\langle c^{(\varepsilon)}_{smnh}\frac{\partial\theta_{n}^{kl}}{\partial x_h}(W_{s}^{ij})_m+\epsilon^{(\varepsilon)}_{hm}(M_{ij})_m\frac{\partial\Theta_{kl}}{\partial x_h}\rangle_{I_\delta}\nonumber\\
&+\varepsilon^{2}\langle c^{(\varepsilon)}_{smnh}\frac{\partial\theta_{n}^{kl}}{\partial x_h}\frac{\partial\theta_{s}^{ij}}{\partial x_m}+\epsilon^{(\varepsilon)}_{hm}\frac{\partial\Theta_{ij}}{\partial x_m}\frac{\partial\Theta_{kl}}{\partial x_h}\rangle_{I_\delta}\label{hmm cc22}
 \end{align}
$2\times(\ref{hmm cc12})-(\ref{hmm cc22})$, we have
\begin{align}
&c_{ijkl}^{H}-c_{ijkl}^{(0)}\nonumber\\
=&\langle c^{(\varepsilon)}_{smnh}(y)(W_{n}^{kl})_h(W_{s}^{ij})_m+e^{(\varepsilon)}_{smh}(y)(M_{kl})_h(W_{s}^{ij})_m\rangle_{I_\delta}\nonumber\\
-&\langle c_{smnh}(y)(W_{n}^{kl})_h(W_{s}^{ij})_m+e_{smh}(y)(M_{kl})_h(W_{s}^{ij})_m\rangle_{Y}\nonumber\\
+&\varepsilon^{2}(2\langle c^{(\varepsilon)}_{smnh}\frac{\partial\theta_{n}^{kl}}{\partial x_h}\frac{\partial\theta_{s}^{ij}}{\partial x_m}+e^{(\varepsilon)}_{smh}\frac{\partial \Theta_{kl}}{\partial x_h}\frac{\partial\theta_{s}^{ij}}{\partial x_m}\rangle_{I_\delta}-\langle c^{(\varepsilon)}_{smnh}\frac{\partial\theta_{n}^{kl}}{\partial x_h}\frac{\partial\theta_{s}^{ij}}{\partial x_m}+\epsilon^{(\varepsilon)}_{hm}\frac{\partial\Theta_{ij}}{\partial x_m}\frac{\partial\Theta_{kl}}{\partial x_h}\rangle_{I_\delta})\nonumber\\
+&2\varepsilon (\langle (e^{(\varepsilon)}_{smh}(W_{s}^{ij})_m-\epsilon^{(\varepsilon)}_{mh}(M_{ij})_m)\frac{\partial\Theta_{kl}}{\partial x_h}\rangle_{I_\delta}+\langle(c^{(\varepsilon)}_{smnh}(W_{n}^{kl})_h+e^{(\varepsilon)}_{smh}(M_{kl})_h)\frac{\partial\theta_{s}^{ij}}{\partial x_m}\rangle_{I_\delta})
 \end{align}
Setting
\begin{align}
I_0(\varepsilon^0)
=&\langle c^{(\varepsilon)}_{smnh}(y)(W_{n}^{kl})_h(W_{s}^{ij})_m+e^{(\varepsilon)}_{smh}(y)(M_{kl})_h(W_{s}^{ij})_m\rangle_{I_\delta}\nonumber\\
-&\langle c_{smnh}(y)(W_{n}^{kl})_h(W_{s}^{ij})_m+e_{smh}(y)(M_{kl})_h(W_{s}^{ij})_m\rangle_{Y}\nonumber\\
I_1(\varepsilon^1)=&2\varepsilon \langle (e^{(\varepsilon)}_{smh}(W_{s}^{ij})_m-\epsilon^{(\varepsilon)}_{mh}(M_{ij})_m)\frac{\partial\Theta_{kl}}{\partial x_h}\rangle_{I_\delta}+2\varepsilon \langle(c^{(\varepsilon)}_{smnh}(W_{n}^{kl})_h+e^{(\varepsilon)}_{smh}(M_{kl})_h)\frac{\partial\theta_{s}^{ij}}{\partial x_m}\rangle_{I_\delta}\nonumber\\
I_2(\varepsilon^2)=&2\varepsilon^{2}\langle c^{(\varepsilon)}_{smnh}\frac{\partial\theta_{n}^{kl}}{\partial x_h}\frac{\partial\theta_{s}^{ij}}{\partial x_m}+e^{(\varepsilon)}_{smh}\frac{\partial \Theta_{kl}}{\partial x_h}\frac{\partial\theta_{s}^{ij}}{\partial x_m}\rangle_{I_\delta}-\varepsilon^{2}\langle c^{(\varepsilon)}_{smnh}\frac{\partial\theta_{n}^{kl}}{\partial x_h}\frac{\partial\theta_{s}^{ij}}{\partial x_m}+\epsilon^{(\varepsilon)}_{hm}\frac{\partial\Theta_{ij}}{\partial x_m}\frac{\partial\Theta_{kl}}{\partial x_h}\rangle_{I_\delta}
 \end{align}
For $I_0(\varepsilon^0)$, referring to the ideas of Lemma \ref{hom lemone}, Lemma \ref{hom lemtwo} and \citet{x.h.wu}, we get
\begin{eqnarray}
\mid I_0(\varepsilon^0)\mid\leq C_0\frac{\varepsilon}{\delta}
\end{eqnarray}
For $I_1(\varepsilon^1)$, we denote the first term and the second term of $I_1(\varepsilon^1)$ as $I_1(1)$ and $I_1(2)$ respectively.\\
By Green Formula, we have
\begin{eqnarray*}
I_1(1)=\displaystyle&\displaystyle\frac{2\varepsilon}{\mid I_\delta\mid}(\displaystyle\int_{\partial I_\delta}\Theta_{kl}(e^{(\varepsilon)}_{smh}(W_{s}^{ij})_m-\epsilon^{(\varepsilon)}_{mh}(M_{ij})_m)\cdot\overrightarrow{n}ds\nonumber\\
\displaystyle&-\displaystyle\int_{I_\delta}\Theta_{kl}\frac{\partial}{\partial x_h}(e^{(\varepsilon)}_{smh}(W_{s}^{ij})_m-\epsilon^{(\varepsilon)}_{mh}(M_{ij})_m)dx
 \end{eqnarray*}\\
Due to (\ref{asy cella}) and  (\ref{hmm proof2(8)}), we obtain,
\begin{eqnarray*}
\mid I_1(1)\mid=\mid\frac{2\varepsilon}{\mid I_\delta\mid}\int_{\partial I_\delta}\phi_{kl}(e^{(\varepsilon)}_{smh}(W_{s}^{ij})_m-\epsilon^{(\varepsilon)}_{mh}(M_{ij})_m)\cdot\overrightarrow{n}ds\mid
\end{eqnarray*}
Using the singularity of the solutions of (\ref{asy cella}) and $H\ddot{o}lder$ inequality, we get
\begin{eqnarray*}
\mid I_1(1)\mid
\leq \frac{2\varepsilon}{\mid I_\delta\mid}C_1\mid\partial I_\delta\mid^{\frac{1}{2}}\leq C_1 \frac{\varepsilon}{\delta}
 \end{eqnarray*}
Similarly, we can get
\begin{eqnarray*}
\mid I_1(2)\mid\leq C_2 \frac{\varepsilon}{\delta}
\end{eqnarray*}
Thus,
\begin{eqnarray}
\mid I_1(\varepsilon^1)\mid\leq max(C_1,C_2)\frac{\varepsilon}{\delta}
\end{eqnarray}
For $I_2(\varepsilon^2)$, we may just follow the steps in the proof of Lemma \ref{hom lemthree} to get the following estimate,
\begin{eqnarray}
\mid I_2(\varepsilon^2)\mid\leq C_3\frac{\varepsilon}{\delta}
\end{eqnarray}
Therefore, we obtain
\begin{eqnarray*}
e_{c}(HMM)\leq max\{C_0,C_1,C_2,C_3\}\frac{\varepsilon}{\delta}= C_c\frac{\varepsilon}{\delta}
\end{eqnarray*}
Similarly, we can also get
\begin{eqnarray*}
e_{d}(HMM)\leq C_{d}\frac{\varepsilon}{\delta}
\end{eqnarray*}
For the estimation of $e_{e}(HMM)$, we do not have essential difference with the estimation of $e_{c}(HMM)$ and $e_{d}(HMM)$, except some techniques on constructing suitable expressions for analytical convenience, which is not necessary to be writen out in details here.\\
This completes the proof of Theorem \ref{hmmthm2}
\end{proof}

\section{Numerical simulation}
\subsection{Numerical example}
In order to illustrate the multi-scale methods mentioned above in this paper, we introduce an numerical example of 1-3 type piezoelectric composite materials made of piezoceramic(PZT) fibers embedded in a soft non-piezoelectric matrix(polymer). Since, for a transversely isotropic piezoelectric solid, the stiffness matrix, the piezoelectric matrix and the dielectric matrix simplify so that there remain 11 independent coefficients, we take the piezoelectric composite materials aligned fibers made of a transversely isotropic piezoelectric solid (PZT), embedded in an isotropic polymer matrix. Moreover, it is easy to verify that the resulting composite is a transversely isotropic piezoelectric material too. Therefore, if we take
 \begin{align*}
11\longrightarrow 1\qquad 22\longrightarrow 2\qquad 33\longrightarrow3\qquad23\longrightarrow4 \qquad 13\longrightarrow 5\qquad 12\longrightarrow 6
\end{align*}
then the constitutive relations of both each component and the resulting composite have the following form.
 \begin{equation*}
        \left(\begin{array}{c}
            \sigma_{11}\\
            \sigma_{22}\\
            \sigma_{33}\\
            \sigma_{23}\\
            \sigma_{31}\\
            \sigma_{12}\\
            D_{1}\\
            D_{2}\\
            D_{3}
          \end{array}\right)=\left(\begin{array}{ccccccccc}
            c_{11}& c_{12} &c_{13} &0 &0 &0 &0 &0 &-e_{13}\\
            c_{12}& c_{22} &c_{13} &0 &0 &0 &0 &0 &-e_{13}\\
            c_{13}& c_{13} &c_{33} &0 &0 &0 &0 &0 &-e_{33}\\
            0& 0& 0& c_{44}&0 &0 &0 &-e_{15} &0\\
            0& 0& 0& 0&c_{44}&0&-e_{15}&0&0 \\
            0& 0& 0& 0&0&c_{66}&0&0&0\\
            0& 0& 0& 0&e_{15}&0&\epsilon_{11}&0&0 \\
            0& 0& 0& e_{15}&0 &0 &0 &\epsilon_{11} &0\\
            e_{13}& e_{13} &e_{33} &0 &0 &0 &0 &0 &\epsilon_{33}
          \end{array}\right)
        \left(\begin{array}{c}
            s_{11}\\
            s_{22}\\
            s_{33}\\
            s_{23}\\
            s_{31}\\
            s_{12}\\
            E_{1}\\
            E_{2}\\
            E_{3}
          \end{array}\right)
      \end{equation*}

In our numerical example, we take the material properties of the composite constituents fiber (PZT-5) and matrix (polymer) as Table 1.
\begin{table}[htb]
\caption{\label{table.label}\footnotesize{Material properties of the composite constituents fiber(PZT-5) and matrix(polymer)}}
\centering
\bigskip
\begin{scriptsize}
\begin{tabular}{rlllllllllll}
\hline\\[-1.0 ex]
 & $c_{11}(10^{10})$ & $c_{12}(10^{10})$ & $c_{13}(10^{10})$ & $c_{33}(10^{10})$ & $c_{44}(10^{10})$ & $c_{66}(10^{10})$&$e_{15}$&$e_{13}$&$e_{33}$&$\epsilon_{11}(10^{-9})$&$\epsilon_{33}(10^{-9})$\\[0.75 ex]\hline\\[-0.75 ex]
PZT-5 & 12.1 & 7.54 & 7.52 & 11.1 & 2.11 & 2.28&12.3&-5.4&15.8&8.11&7.35\\[0.75 ex]
Polymer & 0.386 & 0.257 & 0.257 & 0.386 & 0.064 & 0.064&--&--&--&0.07965&0.07965
\\\hline
\end{tabular}
\end{scriptsize}
\end{table}

And the multi-scale structure of the composite is shown in .

We take $\Omega=[0,5]\times[0,5]\times[0,5]$ as the computation domain, and $R$ denotes the radius of the fiber in one microscopic periodic. $\varepsilon=0.015625$ is the microscopic periodic. Ignoring volume forces, we apply boundary condition as follows,
\begin{align*}
u_{i}=0\qquad\varphi(5,y,z)=1000\qquad\varphi(x,y,z)=0\quad for\quad x\neq 5,\qquad\mbox{on $ \partial\Omega$}
\end{align*}

The computation domain $\Omega$ is divided into $40\times 40\times 40$ elements, the coarse mesh. Each element contains  $8\times 8\times 8$ microscopic periodicals, see Picture 2. The computation process for given R and  $\delta$ is as follows:
\begin{enumerate}
\item Solving the two cell problems (\ref{asy cella}) and (\ref{asy cellb}) and calculating the effective coefficients as (\ref{asy coefficient}).
\item Solving (\ref{macro equperiodic}) on $\Omega$ with the effective coefficients got in Step 1.
\item Given $\alpha$, solving the cell problems of HMM (\ref{hmm modone}) and (\ref{hmm modtwo}) on $I_\delta(x_\alpha)$. Then we give out the estimation of the effective coefficients at scale H by (\ref{hmm coef1}) and (\ref{hmm coef2}). In this numerical example, we take one sample in each Macroscopic element K.
\item Solving (\ref{hmm macroweak}) on $\Omega$ by the estimation of the effective coefficients got in Step 3.
\end{enumerate}
\setlength{\unitlength}{1.0cm}
\begin{picture}(6,6)
\put(-1.5,-1)
{
\setlength{\unitlength}{1.0cm}
\begin{picture}(10,10)
\multiput(0,0)(0,0.125){40}
{
\setlength{\unitlength}{1.0cm}
\begin{picture}(5,0.5)
\multiput(0,0)(0.125,0){40}
{
\framebox(0.125,0.125){}
}
\end{picture}
}
\put(0.4,5){\line(1,1){2.0}}
\put(5.4,5){\line(1,1){2.0}}
\put(5.4,0){\line(1,1){2.0}}
\put(2.4,7.0){\line(1,0){5}}
\put(7.4,7.0){\line(0,-1){5}}
\multiput(0.45,5.05)(0.05,0.05){39}{\line(1,0){5}}
\multiput(0.525,5.0)(0.125,0){39}{\line(1,1){2.0}}
\multiput(5.45,5.05)(0.05,0.05){39}{\line(0,-1){5}}
\multiput(5.4,0.125)(0,0.125){39}{\line(1,1){2.0}}
\end{picture}
}
\put(8,0.5)
{
\setlength{\unitlength}{0.5cm}
\begin{picture}(10,10)
\multiput(0,0)(0,0.625){8}
{
\setlength{\unitlength}{0.5cm}
\begin{picture}(5,0.625)
\multiput(-0.4,0)(0.625,0){8}
{
{\framebox(0.625,0.625){}}
\put(-0.3125,0.3125){\color{BrickRed}{\circle*{0.3125}}}
}
\end{picture}
}

\multiput(0.4,5.0)(0.25,0.25){9}{\line(1,0){5}}
\multiput(5.4,5.0)(-0.625,0){9}{\line(1,1){2.0}}
\multiput(7.4,7.0)(-0.25,-0.25){8}{\line(0,-1){5}}
\multiput(5.4,0)(0,0.625){9}{\line(1,1){2.0}}
\end{picture}
}

\put(13,1.0)
{
\setlength{\unitlength}{0.25cm}
\begin{picture}(10,10)

\put(0.4,0){\framebox(5.0,5.0){}}
\put(2.9,2.5){\color{BrickRed}{\circle*{50.0}}}

\put(0.4,5){\line(1,1){2.0}}
\put(5.4,5){\line(1,1){2.0}}
\put(5.4,0){\line(1,1){2.0}}
\put(2.4,7.0){\line(1,0){5}}
\put(7.4,7.0){\line(0,-1){5}}


\end{picture}
}
\put(3.0,1.05){\vector(1,0){5.3}}
\put(7,1.15){unit}
\put(6.8,0.65){element}
\put(10.55,1.2){\vector(1,0){2.7}}
\put(12.35,1.3){unit}
\put(12.40,0.85){cell}
\put(14,1.7){\line(0,1){1.5}}
\put(13.7,3.25){fiber}
\put(13.5,0.5){\line(1,1){0.7}}
\put(13,0.25){matrix}
\put(0,-3.5){Picture 2. The multi-scale structure of 1-3 type piezoelectric composite materials}
\end{picture}\\[4.4 ex]

\newpage
\subsection{Numerical results and analysis}
\subsubsection{Order of the effective coefficients}
For given $R=0.3125\varepsilon$ and $R=0.4375\varepsilon$, we take $\delta=4\varepsilon$, $\delta=3\varepsilon$, $\delta=2\varepsilon$ respectively. We take the scale of elements on $I_\delta$ as $h=0.00078125$. Then, following the steps in Section 3.1, we give out the order of error estimate between the effective coefficients got in Step 1 and the estimation of the effective coefficients got in Step 3.

\begin{table}[htbp]
\centering
\bigskip
\begin{scriptsize}
\begin{tabular}{cllllll}
  \hline\\[-2.4 ex]
$\frac{\varepsilon}{\delta}$ &$c^{H}_{11}$ & order& $c^H_{12}$ & order&$c^{H}_{13}$& order \\[0.75 ex] \hline
$1/4$& 1.11e+10   & 0.93     &6.18e+09   & 0.99 &6.70e+09  & 0.95\tabularnewline
$1/3$& 1.23e+10   &0.97 &6.99e+09   &1.03  &7.53e+09  &0.99   \tabularnewline
$1/2$& 1.51e+10   &--        & 8.70e+09  &--& 9.23e+09  &-- \tabularnewline
$c^{(0)}_{ij}$&0.67e+10  &--& 3.71e+09   &--& 4.10e+09  &--
\tabularnewline \hline
\end{tabular}
\end{scriptsize}
\bigskip
\begin{scriptsize}
\begin{tabular}{cllllll}
  \hline\\[-2.4 ex]
$\frac{\varepsilon}{\delta}$ &$c^H_{33}$ & order&$c^{H}_{44}$ & order& $c^H_{66}$ & order\\[0.75 ex]\hline
$1/4$& 2.28e+10   & 1.06& 2.25e+09          & 1.15& 2.08e+09      &0.92\\
$1/3$& 2.35e+10      &1.07 &2.61e+09      &1.26 & 2.38e+09   &0.94\\
$1/2$& 2.48e+10      &--&3.47e+09   &-- & 2.98e+09   &-- \\
$c^{(0)}_{ij}$  & 2.11e+10  &--& 1.33e+09  &--&1.10e+09      &--
\\\hline
\end{tabular}
\end{scriptsize}

\bigskip
\begin{scriptsize}
\begin{tabular}{cllllll}
  \hline\\[-2.4 ex]
$\frac{\varepsilon}{\delta}$ &$e^{H}_{13}$ & order& $e^H_{33}$ & order&$e^{H}_{15}$ & order\\[0.75 ex]\hline
$1/4$& - 0.31       & 0.95&   5.88      & 1.06& 0.46      &1.18 \\
$1/3$&- 0.37      &0.99&     5.83      &1.07& 0.63  &1.28\\
$1/2$&- 0.49  &--    &    5.74         &-- &  1.04       &--\\
$e^{(0)}_{ij}$&- 0.11   &--&    6.01       &--&0.04&--
\\\hline
\end{tabular}
\end{scriptsize}
\bigskip
\begin{scriptsize}
\begin{tabular}{cllllll}
  \hline\\[-2.4 ex]
$\frac{\varepsilon}{\delta}$ &$\epsilon^{H}_{11}$ & order& $\epsilon^{H}_{33}$ & order \\[0.75 ex]\hline
$1/4$&6.40e-10  &0.83&2.38e-09  &  1.03\\
$1/3$& 7.67e-10&0.78 &  2.38e-09 &1.05\\
$1/2$& 9.90e-10&-- &  2.38e-09  &-- \\
$\epsilon^{(0)}_{ij}$&1.72e-10&--&  2.40e-09&--
\\\hline
\end{tabular}
\end{scriptsize}
\caption{\label{table1} \footnotesize{Accuracy of HMM on coefficients for 1-3 type composites with $R=0.3125\varepsilon$}}
\end{table}

\begin{table}[htbp]
\centering
\bigskip
\begin{scriptsize}
\begin{tabular}{cllllll}
  \hline\\[-2.4 ex]
$\frac{\varepsilon}{\delta}$ &$c^{H}_{11}$ & order& $c^H_{12}$ & order&$c^{H}_{13}$ & order\\[0.75 ex]\hline
$1/4$&2.31e+10   & 1.01 &1.00e+10      & 1.07  &1.28e+10  &  1.04\\
$1/3$&2.55e+10   &1.03  &1.17e+10      &1.09 & 1.43e+10  &1.05\\
$1/2$&3.05e+10   &--    & 1.51e+10     &--& 1.76e+10   &--\\
$c^{(0)}_{ij}$&1.58e+10   &--& 0.55e+10   &--& 0.83e+10   &--
\\\hline
\end{tabular}
\end{scriptsize}
\bigskip
\begin{scriptsize}
\begin{tabular}{cllllll}
  \hline\\[-2.4 ex]
$\frac{\varepsilon}{\delta}$&$c^H_{33}$ & order &$c^{H}_{44}$ & order& $c^H_{66}$ & order\\[0.75 ex]\hline
$1/4$&4.27e+10  &     1.17& 5.38e+09   & 1.31    &4.08e+09        &0.98\\
$1/3$&4.39e+10  &1.15 &  6.22e+09      &1.28 & 4.71e+09          &0.98  \\
$1/2$&4.64e+10  &--&8.03e+09   &--     & 5.94e+09         &--  \\
$c^{(0)}_{ij}$& 3.97e+10  &--& 3.56e+09    &--& 2.17e+09          &--
\\\hline
\end{tabular}
\end{scriptsize}
\end{table}
\bigskip
\begin{table}
\begin{scriptsize}
\begin{tabular}{cllllll}
  \hline\\[-2.4 ex]
$\frac{\varepsilon}{\delta}$ &$e^{H}_{13}$ & order& $e^{H}_{33}$ & order&$e^{H}_{15}$ & order\\[0.75 ex]\hline
$1/4$&-0.76  &1.03&11.40    &   1.17&  1.07       & 1.23\\
$1/3$& -0.88     &1.05  &   11.31      &1.15& 1.49 &1.24 \\
$1/2$&  -1.12   &--&  11.13     &--&  2.42       &--   \\
$e^{(0)}_{ij}$& -0.42 &--&   11.63     &--&    &--
\\\hline
\end{tabular}
\end{scriptsize}
\centering
\bigskip
\begin{scriptsize}
\begin{tabular}{cllllll}
  \hline\\[-2.4 ex]
$\frac{\varepsilon}{\delta}$ &$\epsilon^{H}_{11}$ & order& $\epsilon^{H}_{33}$ & order\\[0.75 ex]\hline
$1/4$&1.35e-09  & 0.84&4.59e-09  &   1.11\\
$1/3$&  1.60e-09  &0.80 &  4.59e-09  &1.10\\
$1/2$& 2.05e-09&--  & 4.57e-09  &--  \\
$\epsilon^{(0)}_{ij}$& 4.32e-10&--&   4.61e-09&--
\\\hline
\end{tabular}
\end{scriptsize}
\caption{\label{table2} \footnotesize{Accuracy of HMM on coefficients for 1-3 type composites with $R=0.4375\varepsilon$}}
\end{table}
\newpage
\subsubsection{Analysis of the results}

In the numerical result, we can see that the order of error of the effective coefficients we have got in the numerical examples support the corresponding conclusions in our theoretical analysis in section 5. Therefore, we can predict that the estimation of the effective coefficients in the multi-scale model we designed under the framework of HMM can be better and better as the number of the cells taken in one sample increases larger and larger.
\section{Conclusions}
In this paper, a multi-scale model for piezoelectric composite materials under the framework of Heterogeneous Multi-scale Method(HMM) is proposed. In order to verify the capability of the multi-scale model we developed, macroscopic model is derived from microscopic
model of piezoelectric composite material by asymptotic expansion for materials with periodic microstructure. Convergence rate of asymptotic expansion is proved to be $\sqrt{\epsilon}$ under the framework of homogenization theory. We then give out both error
estimate between HMM solutions and homogenization solutions and error estimate of the effective coefficients for piezoelectric composite materials with periodic microstructure. Therefore, HMM solutions is shown to convergence to homogenization solutions in the order of $(H+\frac{\epsilon}{\delta})$ and the effective coefficients got by HMM modeling is shown to convergence to effective coefficients got by homogenization theory in the order of $\frac{\epsilon}{\delta}$. Moreover, our numerical simulation result support the corresponding theoretical conclusions we got above very well.
\section*{Acknowledgement}
We would like to thank Pingbing Ming and Felix Otto for helpful discussions.
\section*{References}

\end{document}